\title{Thresholds for Extreme Orientability}
\author{
Po-Shen Loh\thanks{Department of Mathematical Sciences, Carnegie Mellon
  University, Pittsburgh, PA 15213. Email: ploh@cmu.edu. Research
  supported by an NSA Young Investigators Grant and a USA-Israel BSF Grant.}  
\and
Rasmus Pagh\thanks{IT University of Copenhagen, Copenhagen, Denmark. Email: pagh@itu.dk.}
}
\date{}
\newtheorem{theorem}{Theorem}[section]
\newtheorem{definition}[theorem]{Definition}
\newtheorem{lemma}[theorem]{Lemma}
\newcommand{\pr}[1]{\mathbb{P}\left[#1\right]}
\newcommand{\E}[1]{\mathbb{E}\left[#1\right]}
\newcommand{\bin}[1]{\text{Bin}\left[#1\right]}
\begin{document}
\maketitle

\begin{abstract}
	Multiple-choice load balancing has been a topic of intense study since the seminal paper of Azar, Broder, Karlin, and Upfal.
	Questions in this area can be phrased in terms of {\em
        orientations\/} of a graph, or more generally a $k$-uniform random hypergraph.
	A {\em $(d,b)$-orientation\/} is an assignment of each edge to $d$ of its vertices, such that no vertex has more than $b$ edges assigned to it.
	Conditions for the existence of such orientations have been
        completely documented except for the ``extreme'' case of $(k-1,1)$-orientations.
	We consider this remaining case, and establish:
	\begin{itemize}
	\item The density threshold below which an orientation exists with high probability, and above which it does not exist with high probability.
	\item An algorithm for finding an orientation that runs in linear
          time with high probability, with explicit polynomial bounds on the
          failure probability.
	\end{itemize}
	Previously, the only known algorithms for constructing $(k-1,1)$-orientations worked for $k\leq 3$, and were only shown to have {\em expected\/} linear running time.  
\end{abstract}

\medskip

\noindent
{\bf Key words.} Multiple-choice hashing, random hypergraphs, orientations.

\medskip

\section{Introduction}

The efficiency of many algorithms and data structures rests on the fact
that randomly and independently throwing $m$ balls into $n$ bins ensures a
distribution that is, with high probability, close to uniform.  Since the
seminal paper of Azar et al.~\cite{Azar:1999:BA} a large literature has
grown around even stronger {\em multiple-choice\/} load balancing schemes
where the location of each ball is selected within a random {\em set\/} of
$k>1$ bins.

These problems have been studied both in the {\em on-line\/} setting, where
balls and their possible locations are revealed one by one, and in the {\em
off-line\/} setting where we are interested in the best allocation of a
given set of balls.  Most often, the focus of multiple-choice schemes is on
minimizing the maximum number of balls contained in any bin.  The question
can also be turned around to ask for the largest number of balls that can
be placed such that there are at most $b$ balls in each bin.  Of course,
this number depends on the random choices made, but in the off-line setting
it turns out that there is a well-defined {\em threshold\/} $m=(1\pm
o(1))\alpha n$, below which it is highly likely that the allocation is
possible, and above which it is highly unlikely that the allocation is possible.
Here, $\alpha$ is a constant that depends on $k$ and $b$, but not on $n$.

In this paper we consider the scenario where each ball comes in {\em $d$
copies\/}, and must be placed in exactly $d$ (distinct) out of $k$
possible bins.  Observe that the case $d=k$ is not so interesting, because
it is equivalent to the single-choice case with $md$ balls.  Thus the
interesting extreme case is $d = k-1$, which is the focus of this paper.
Motivation for copying each ball comes from parallel and distributed
systems where we  want high redundancy (resistance to $d-1$ failures),
and/or want to ensure that any set of balls can be accessed in parallel
with only a single request per bin.  Early papers investigating such
schemes
include~\cite{DietzfelbingerMey93,JACM::UpfalW1987,SICOMP::StockmeyerV1984}.
As a more recent example, Amossen and Pagh~\cite{AP11} considered the case
$k=3$, $d=2$, $b=1$, and showed that up to $\frac{(1-\varepsilon)}{6}m$
balls can be placed with high probability,\footnote{Meaning probability
tending to $1$ as $n\rightarrow\infty$.} for any constant $\varepsilon >
0$.  This was used to construct a data structure for sets that allows very
fast computation of set intersections on graphics hardware.  In this paper
we show that the constant 6 in~\cite{AP11} cannot be reduced, i.e., that
$m=(1\pm o(1)) n/6$ is the threshold for the problem of allocating balls
into $2$ of $3$ bins with maximum load~$1$.  In fact, we generalize this
result to the extreme case $d=k-1$, $b=1$ for any $k>2$, giving explicit
bounds on the probability of successful allocation in terms of $m$.  We
also present a generalization of the algorithms of~\cite{AP11,cuckoo-jour}
that  computes a $(k-1,1)$-orientation (if one exists) of a given
$k$-regular hypergraph, and show that it runs in linear time with high
probability.  This strengthens~\cite{AP11} which only shows linear running
time in expectation.

\subsection{Related work}

Multiple-choice balls and bins scenarios can be modeled as a random
$k$-regular hypergraph with $m$ edges (balls) on $n$ vertices (bins), where
edges are chosen i.i.d.\ uniformly from the set of all $k$-sets of
vertices.  Let $H_{n,m;k}$ be the random $k$-uniform hypergraph with $n$
vertices and $m$ hyperedges, where each such object is taken with equal
probability.  In the regime of interest in this work, when $m$ is linear in
$n$, there is essentially no difference between allowing and disallowing
multiple edges, because for $k \geq 3$, the probability that the
multi-hypergraph analogue repeats an edge is only $O(n^{-1})$.  Given such
a hypergraph, a {\em $(d,b)$-orientation\/} is an assignment of each edge
to $d$ of its vertices, such that no vertex has more than $b$ edges
assigned to it.

\paragraph{On-line setting.}
In our description of the on-line setting, we restrict attention to the case
where balls cannot be moved, once placed into bins.  Azar et
al.~\cite{Azar:1999:BA} considered $(1,b)$-orientations in the on-line
setting, and showed that the greedy algorithm that always assigns a ball to
its least loaded bin achieves a $(1,O(m/n+\log\log m /\log
k))$-orientation.  Tighter bounds for the maximum load of
$(1,b)$-orientations were later obtained by Berenbrink et
al.~\cite{Berenbrink:2000:BAH}.

\paragraph{Off-line setting.}
In the off-line setting, the threshold for $(1,1)$-orientations with $k=2$
can be shown (see, e.g.~\cite{cuckoo-jour}) to coincide with the appearance
of a {\em giant component\/} in the random graph, which is known to happen
at $m=(1\pm o(1)) n/2$ with high probability~\cite{ER-random-graphs}.
Several groups of
researchers~\cite{xorsat,conf/icalp/FountoulakisP10,Frieze:2009}
independently established the thresholds for $(1,1)$-orientations for every
$k>2$.  Generalizing in another direction, Fernholz and
Ramachandran~\cite{FR07} and Cain, Sanders, and Wormald~\cite{CSW07} showed
thresholds for $(1,b)$-orientations for $k=2$, and gave expected linear
time algorithms for computing an orientation.  This result was later
extended to $k>2$ by Fountoulakis et
al.~\cite{DBLP:conf/soda/FountoulakisKP11}.
Gao and Wormald~\cite{DBLP:conf/stoc/GaoW10} established thresholds for $(d,b)$-orientations, given that $b$ is a sufficiently large constant (depending on $d$ and $k$).
Independently of our work, Lelarge \cite{lelarge} recently developed new technical
machinery for this problem, which handles all combinations of the
parameters
$k$, $d$, and $b$ that satisfy $\max(k-d,b)\geq 2$.

\subsection{Our contribution}

In this paper we consider the remaining ``extreme'' case of
$\max(k-d,b)=1$, i.e., $d = k-1$ and $b=1$.
For this, we highlight two links between
Probabilistic Combinatorics and $(k-1,1)$-orientations.  First, we observe 
the connection between the literature on the phase transition in random
hypergraphs and $(k-1,1)$-orientations, which provides a natural
explanation for the threshold phenomenon experimentally documented
in~\cite{AP11}. Second,  we derive explicit, quantitative high-probability
bounds for the subcritical running time, by tracking a key parameter known
as ``susceptibility,'' through the Differential Equations method for
analyzing discrete random processes.  Previous bounds were only of
expected-time type.  Also, since we seek good polynomial-type dependencies
in our probability bounds, we perform a more careful analysis of the
susceptibility growth, which is substantially sharper than in previous
published work (e.g., \cite{BFKLS}) which was satisfied with error bounds
that could tend to zero very slowly. Our main theorem refers to the pseudocode of the {\sc Orient} algorithm, which can be found in section~\ref{sec:algorithm}. Its running time is determined by the number of iterations, which we define to be the number of times the condition in the {\bf while} loop is evaluated.

\begin{theorem}
  Let $0 < \epsilon < \frac{1}{2}$ be given, and assume that
  $\frac{n}{\log^6 n} > \frac{40000 k^6}{\epsilon^{12}}$.  Let $m =
  (1-\epsilon) \frac{n}{k(k-1)}$.  With probability at least $1 - 3
  n^{-1}$, all edges of the random $k$-uniform hypergraph $H_{n,m;k}$ can be 
  $(k-1, 1)$-oriented by the {\sc Orient} procedure using a total of at most
  \begin{displaymath}
    2 k^2 \left(
    \frac{1}{\epsilon} 
    + \frac{ 200 k^3 \log^3 n }{\epsilon^7 \sqrt{n}} 
    \right) \cdot n
    \,.
  \end{displaymath}
  iterations, each taking constant time.
  \label{thm:main}
\end{theorem}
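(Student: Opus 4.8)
The plan is to establish the theorem on a single ``good'' event of probability at least $1-3n^{-1}$, on which (i) $H_{n,m;k}$ is $(k-1,1)$-orientable, so {\sc Orient} is guaranteed to succeed, and (ii) the number of {\bf while}-loop iterations is bounded as claimed. For (i) I would invoke the defect form of Hall's theorem applied to the edge--vertex incidence bipartite graph after replacing each hyperedge by $k-1$ parallel clones sharing the same neighbourhood: a $(k-1,1)$-orientation exists if and only if every set $S$ of hyperedges spans at least $(k-1)|S|$ vertices. A connected $k$-uniform sub-hypergraph with $e$ edges spans at least $(k-1)e+1-\ell$ vertices, where $\ell\ge 0$ is its excess (number of independent cycles), so this spanning condition holds throughout $H$ exactly when no component has excess $\ge 2$. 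A routine first-moment bound shows that for $m$ below the phase transition $\frac{n}{k(k-1)}$ the expected number of sub-hypergraphs of excess $\ge 2$ is $O(\mathrm{poly}(1/\epsilon)/n)$, which the hypothesis on $n$ makes at most $n^{-1}$; on its complement $H$ is orientable and {\sc Orient} terminates with a valid orientation by its correctness (Section~\ref{sec:algorithm}).

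For (ii) I would first read off from the pseudocode that the total number of iterations is at most $2k^2\sum_C |C|^2 = 2k^2 n\Psi$, where $\Psi := \frac1n\sum_C|C|^2$ is the hypergraph susceptibility, the sum ranging over connected components $C$ with $|C|$ the number of vertices: {\sc Orient} processes edges incrementally in the style of cuckoo-hashing insertion, the work on any edge is $O(k)$ times the size of its current component, and summing over the edges of $C$ contributes $O(k^2|C|^2)$, the final hypergraph dominating since components only grow. The heart of the proof is then the bound $\Psi\le \frac1\epsilon + O\!\left(\frac{k^3\log^3 n}{\epsilon^7\sqrt n}\right)$ with high probability. I would track $Y_i := \sum_C|C|^2$ as edges are added one at a time: adding a random edge typically merges $k$ distinct components whose vertex-sizes behave like independent copies of ``the size of the component of a uniform vertex,'' giving $\E{Y_{i+1}-Y_i\mid\mathcal F_i} = k(k-1)(Y_i/n)^2(1+o(1))$, the discretisation of the ODE $y' = k(k-1)y^2$ with $y(0)=1$. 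This integrates to $y(s) = (1-k(k-1)s)^{-1}$, so at $s = m/n = \frac{1-\epsilon}{k(k-1)}$ the predicted value is exactly $y = 1/\epsilon$ (the factors $k(k-1)$ cancel, which is why the leading term is $k$-free).

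To convert this into a high-probability bound with an explicit error I would run the Differential Equations method: define the stopping time at which $Y_i$ first exceeds $(1+\delta)$ times the ODE prediction, verify that the stopped, recentred process is a supermartingale up to a controlled drift discrepancy, and apply Azuma--Hoeffding with one-step differences of order $(\max_C|C|)^2$. This needs the standard subcritical estimate that whp every component of $H_{n,m;k}$ has $O(k^2\epsilon^{-2}\log n)$ vertices (a Galton--Watson tail bound together with a union bound over the $n$ starting vertices), which simultaneously controls the jump sizes and the tail of $Y_m$; the $\log^3 n$, the power $\epsilon^{-7}$, and the $\sqrt n$ in the error term fall out of this Azuma estimate combined with the drift discrepancy accumulated over the $\le n$ steps. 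Finally a union bound over the three bad events --- a component with more than $O(k^2\epsilon^{-2}\log n)$ vertices, the susceptibility process deviating from its ODE prediction by more than the stated amount, a component of excess $\ge 2$ --- each of probability at most $n^{-1}$, yields failure probability $3n^{-1}$; the hypothesis $n/\log^6 n > 40000 k^6/\epsilon^{12}$ makes the tail bounds effective and, via $\left(k^3\log^3 n/(\epsilon^6\sqrt n)\right)^2 < 1/40000$, guarantees the error term is genuinely dominated by $1/\epsilon$.

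I expect the last step to be the main obstacle. Previous analyses of susceptibility growth (e.g.\ \cite{BFKLS}) were content with error terms vanishing arbitrarily slowly, whereas pinning down the explicit $\mathrm{poly}(\log n)/\sqrt n$ rate forces one to control the per-step ODE drift error and the martingale fluctuations \emph{simultaneously and quantitatively}, with both estimated through the only-logarithmically-large component sizes; making the dependence on $1/\epsilon$ --- which governs both the blow-up of the ODE near $s = 1/(k(k-1))$ and the subcritical component-size bounds --- come out as a fixed power is where the care is concentrated.
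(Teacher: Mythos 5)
Your plan for the running-time bound coincides with the paper's: the total number of \textbf{while}-loop iterations is bounded by $O(k^2)$ times $\sum_C|C|^2 = n\chi$ where $\chi$ is the susceptibility, and $\chi$ is controlled by the Differential Equations method. The ODE $x' = k(k-1)x^2$ with $x(0)=1$, the solution $x(\theta) = (1-k(k-1)\theta)^{-1}$, the supermartingale construction, the Azuma bound with the $O(\log n)$ component-size cap controlling the one-step differences, and the final error term of order $k^3\log^3 n/(\epsilon^7\sqrt n)$ — all of this matches the paper's Theorem~\ref{thm:susceptibility} closely.

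Where you diverge is in the orientability budget. You propose to spend one of the three $n^{-1}$ events on a first-moment bound for complex sub-hypergraphs. The paper's three events are instead: the coupling of $H_{n,m;k}$ with the sequential $H_0,H_1,\ldots$ process (Lemma~\ref{lem:process-Hnm}), the $O(k\epsilon^{-2}\log n)$ component-size bound (Lemma~\ref{lem:log-components}), and the Azuma deviation — and the paper's proof of Theorem~\ref{thm:main} does not explicitly bound the probability of non-orientability at all, relying tacitly on the discussion in Section~2. In that respect you are being \emph{more} careful than the source. However, two issues with your orientability step: \textbf{(1)} the claimed first-moment bound on sub-hypergraphs of excess $\ge 1$ (your $\ell\ge 2$) is not as routine as you suggest — the expected number of connected \emph{sub-hypergraphs} with positive excess diverges for $\epsilon$ in part of the subcritical window (the sum over vertex-count has a $(e(1-\epsilon))^v$ factor), so you must count minimal obstructions or complex \emph{components}, and combine with the component-size cap; \textbf{(2)} even granting an expected count of the form $c(k,\epsilon)/n$, the hypothesis $n/\log^6 n > 40000k^6/\epsilon^{12}$ does not make $c(k,\epsilon)/n\le n^{-1}$ — you would need $c(k,\epsilon)\le 1$, which is not supplied and generally fails as $\epsilon\to 0$, so some further leverage (e.g.\ a $1/n^{1+\delta}$ decay rather than $1/n$, or absorbing the factor elsewhere) is required. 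Finally, if you keep the sequential-process coupling that the DE method needs, you would have four bad events, not three; the accounting must be reconciled with the paper's.
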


This paper is organized as follows.  The next section observes the
natural threshold for extreme orientability.  Then, Section 3 applies the
Differential Equations method to deduce quantitative high-probability
bounds for algorithmic performance in the feasible regime.  The following
(standard) asymptotic notation will be utilized extensively.  For two
functions $f(n)$ and $g(n)$, we write $f(n) = o(g(n))$ or $g(n) =
\omega(f(n))$ if $\lim_{n \rightarrow \infty} f(n)/g(n) = 0$, and $f(n) =
O(g(n))$ or $g(n) = \Omega(f(n))$ if there exists a constant $M$ such that
$|f(n)| \leq M|g(n)|$ for all sufficiently large $n$.  


\section{Non-orientability}

We now investigate why there is no $(k-1,k)$-orientation when the number of edges exceeds $\frac{n}{k(k-1)}$. 
This is done by exhibiting an
obstruction that appears asymptotically almost surely as $n$
approaches infinity.  
One may observe many types of possible obstructions
to orientability.  A simple example for $k > 3$ is the $k$-uniform
hypergraph consisting of two hyperedges overlapping in three vertices.
It is clearly impossible to pick $k-1$ vertices for each hyperedge, as
there are only $2k-3$ vertices to share.  Unfortunately, any
fixed-size obstruction has a threshold for appearance in $H_{n,m;k}$
that is far beyond $\frac{n}{k(k-1)}$, so one cannot simply pinpoint
a single such hypergraph as the culprit for non-orientability.

Instead, we draw inspiration from the case $k=2$ (often referred to as ``cuckoo hashing''~\cite{cuckoo-jour}) where the desired
threshold $\frac{n}{2}$ matches the appearance of the well-studied
\emph{giant component}.  Indeed, the seminal result of Erd\H{o}s and
R\'enyi~\cite{ER-random-graphs} established that in the uniformly
random graph with $cn$ edges, for constants $c < \frac{1}{2}$, the
largest connected component has size $O(\log n)$, whereas for
constants $c > \frac{1}{2}$, the largest connected component has size
$\Omega(n)$.  Further study (see, e.g., the book \cite{JLR-book})
revealed that for $c < \frac{1}{2}$, all connected components are
either trees or unicyclic (containing at most one cycle), whereas for
$c > \frac{1}{2}$, the giant component is multicyclic.  As any
multicyclic component would have too many edges for vertices to be
$(k-1,1)$-orientable, this would establish the result for $k=2$.

The remainder of this section translates the random graph literature into
the orientability context, to observe the threshold for $k \geq 3$.  First,
it is convenient to introduce a measure of how ``crowded'' a component is.

\begin{definition}
  Let $k \geq 3$ be a fixed integer, and let $H$ be a $k$-uniform
  hypergraph.  The \textbf{excess}\/ of~$H$ is the difference $(k-1)
  e(H) - v(H)$, where $v(H)$ and $e(H)$ denote the numbers of vertices
  and edges in $H$, respectively.
\end{definition}

For connected hypergraphs $H$, the excess is always an integer greater
than or equal to $-1$.  When it is $-1$, the hypergraph is acyclic, and called a
\emph{hypertree}.  When the excess is 0, we say that $H$ is
\emph{unicyclic}, and when the excess is positive, we say that $H$ is
\emph{complex}.  Note that in the context of $(k-1,1)$-orientability, 
any complex component is an obstruction.  
Given an edge set $E'$ we define its {\em capacity} as $\text{cap}(E') =
\sum_{v\in V} \min (b,|\{e\in E' : v\in e\}|)$.
We have the following consequence of the max-flow min-cut theorem (see, e.g.~\cite[Section 6.1]{ref:Papadimitriou:1998a}):
\begin{theorem}
A $k$-regular hypergraph $(V,E)$ has a $(d,b)$-orientation if and only if each subset $E'\subseteq E$ has capacity $\text{cap}(E')\geq |E'|d$.
\end{theorem}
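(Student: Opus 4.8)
The plan is to realize $(d,b)$-orientations as integral $s$–$t$ flows in an auxiliary network and then invoke the max-flow min-cut theorem together with the integrality of maximum flows. Concretely, I build a network $N$ with a source $s$, a sink $t$, a node $x_e$ for every edge $e\in E$, and a node $y_v$ for every vertex $v\in V$; add an arc $s\to x_e$ of capacity $d$ for each $e$, an arc $x_e\to y_v$ of capacity $1$ whenever $v\in e$, and an arc $y_v\to t$ of capacity $b$ for each $v$. An integral $s$–$t$ flow of value $|E|d$ must saturate every arc out of $s$, hence push exactly $d$ units through each $x_e$; the unit capacities on the arcs $x_e\to y_v$ force those $d$ units to travel to $d$ \emph{distinct} vertices of $e$; and the capacities $b$ on the arcs into $t$ force each vertex to absorb at most $b$ units. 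So such a flow is literally a $(d,b)$-orientation, and conversely routing one unit along $s\to x_e\to y_v\to t$ for each (edge, assigned vertex) pair of a given orientation yields such a flow. Since all capacities are integers, a maximum flow may be taken integral, so a $(d,b)$-orientation exists if and only if the maximum flow value is $|E|d$ — which is the obvious upper bound, being the total capacity leaving $s$.

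It then remains to rewrite ``maximum flow $=|E|d$'' through minimum cuts. A finite-capacity $s$–$t$ cut is given by the source-side set $A$; writing $E'=\{e:x_e\in A\}$ and $U=\{v:y_v\in A\}$, the cut capacity is
\[
  c(A)=\sum_{e\notin E'}d+\sum_{e\in E'}\bigl|\{v\in e:v\notin U\}\bigr|+\sum_{v\in U}b,
\]
so $c(A)-|E|d=-d|E'|+\sum_{e\in E'}\bigl|\{v\in e:v\notin U\}\bigr|+b|U|$. For fixed $E'$, the $U$-dependent part equals $\sum_{v}\bigl(\deg_{E'}(v)\,\mathbf{1}[v\notin U]+b\,\mathbf{1}[v\in U]\bigr)$, where $\deg_{E'}(v)$ counts the edges of $E'$ through $v$; minimizing over $U$ (put $v$ on the source side precisely when $\deg_{E'}(v)>b$) gives $\sum_v\min(b,\deg_{E'}(v))=\text{cap}(E')$. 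Hence
\[
  \min_{A}\bigl(c(A)-|E|d\bigr)=\min_{E'\subseteq E}\bigl(\text{cap}(E')-d|E'|\bigr),
\]
so the minimum cut is at least $|E|d$ — equivalently, by max-flow min-cut, the maximum flow is exactly $|E|d$ — if and only if $\text{cap}(E')\geq d|E'|$ for every $E'\subseteq E$.

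The converse direction (orientation $\Rightarrow$ capacity condition) also follows directly without flows: restricting a $(d,b)$-orientation to $E'$ distributes $d|E'|$ tokens among the vertices, with vertex $v$ receiving at most one token from each incident edge of $E'$ and at most $b$ in total, hence at most $\min(b,\deg_{E'}(v))$; summing gives $d|E'|\leq\text{cap}(E')$. The only genuine content is the cut computation above, and the points to be careful about are (i) the unit capacities on the arcs $x_e\to y_v$, which are exactly what enforce distinctness of the $d$ chosen vertices of each edge, and (ii) verifying that the per-vertex minimization reproduces the $\min(b,\cdot)$ appearing in the definition of $\text{cap}(\cdot)$; both are straightforward.
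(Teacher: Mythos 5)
Your proof is correct and uses essentially the same construction and argument as the paper: the same flow network with edge-nodes, vertex-nodes, source and sink, and the max-flow min-cut theorem with integrality. The only cosmetic difference is that you minimize the cut value over the vertex side $U$ directly for each fixed $E'$, whereas the paper instead reads off the side assignments from the minimality of an optimal cut; both yield the identity $\min_A \bigl(c(A)-|E|d\bigr)=\min_{E'\subseteq E}\bigl(\text{cap}(E')-d|E'|\bigr)$.
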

\begin{proof}
The capacity sums, over each vertex, an upper bound on how many edges in $E'$ can be oriented towards it.
If some edge set $E'$ has capacity less than $|E'|d$ it is thus impossible to orient all its edges (even ignoring edges outside of $E'$).
For the reverse direction consider the flow network with:
\begin{itemize}
\item Node set $E \cup V \cup \{s,t\}$, i.e., a node per edge and vertex in $(V,E)$, plus a source node $s$, and a sink node $t$.
\item Capacity 1 edges connecting the node of each $e\in E$ to the $k$ nodes in $V$ contained in $e$.
\item Capacity $d$ edges from $s$ to each vertex in $E$, and capacity $b$ edges from vertex in $V$ to $t$. 
\end{itemize}
Observe that an integer $s$-$t$ flow corresponds to an orientation of edges
with a flow of 1 from an edge to each vertex that the edge is oriented
towards. This means that if there is no $(d,b)$-orientation, there is no
integer $s$-$t$ flow of value $|E|d$. Since all capacities in the network
are integral, this in turn means that there exists no flow of value $|E|d$
at all. Using the max-flow min-cut theorem this implies that there is a
minimum $s$-$t$ cut $(S,T)$ such that the total capacity of edges from $S$
to $T$ is $\text{cut}(S,T)<|E|d$. Let $E'$ denote the set of edges that are
members of $S$. Since $(S,T)$ is minimal vertices in $V\cap S$ appear in at
least $b$ edges in $E'$, and vertices in $V\cap T$ appear in at most $b$
edges of $E'$. Thus we obtain:
\begin{align*}
\text{cap}(E') & = \sum_{v\in V\cap S} b + \sum_{v\in V\cap T} |\{e\in E' : v\in e\}|)
 = \text{cut}(S,T) - |E\backslash E'|d
 < |E'| d \enspace .
\end{align*}

\end{proof}
For $b=1$ the capacity of a set $E'$ is exactly the number of distinct vertices in its edges, so the capacity of $E'$ is $(k-1)|E'|$ minus the excess of $E'$. This means that $(k-1,1)$-orientability exactly coincides with the appearance of a complex component.
Much is known about the phase transition in random hypergraphs.  The
following results are from the paper \cite{KL-hypergraph-excess} of
Karo\'nski and \L uczak, which actually determines several results of
much higher precision.

\begin{theorem}
  (Theorem 4 in \cite{KL-hypergraph-excess}.)  Let $k \geq 3$ be a
  fixed integer, and let $m = \frac{n}{k(k-1)} - t(n)$, where $t(n)$
  is any function of higher order than $n^{2/3}$, i.e., $t(n) =
  \omega(n^{2/3})$.  Then $H_{n,m;k}$ consists of hypertrees
  and unicyclic components with high probability (as $n$ grows).
\end{theorem}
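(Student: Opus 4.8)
The plan is a first-moment computation over \emph{minimal obstructions}. As established above, $H_{n,m;k}$ fails to be $(k-1,1)$-orientable exactly when it contains a complex component, i.e.\ a connected sub-hypergraph of positive excess; inside such a component pick one, $H'$, using the fewest edges, so that no sub-hypergraph on a proper subset of its edges is connected with positive excess. A short argument (mimicking the graph case) shows the excess of $H'$ is at most $k-1$: if it were $\ge k$, then since $H'$ is not a hypertree it has an edge whose removal leaves $H'$ connected, and deleting an edge lowers the excess by at most $k-1$, so we could delete it and still have a connected positive-excess hypergraph---contradicting minimality. Hence it suffices to prove $\sum_{e\ge 2}\sum_{r=1}^{k-1}\E{Z_{e,r}}\to 0$, where $Z_{e,r}$ is the number of edge-minimal connected positive-excess sub-hypergraphs of $H_{n,m;k}$ with $e$ edges and excess exactly $r$; such a structure spans $v=(k-1)e-r$ vertices.

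Let $N_k(e,r)$ be the number of such hypergraphs on a \emph{fixed} labeled $v$-set. In the $H_{n,m;k}$ model the probability that $e$ prescribed $k$-sets are all edges is at most $\big(m/\binom{n}{k}\big)^{e}(1+o(1))^{e}$, so with $m=\tfrac{n}{k(k-1)}(1-\delta)$ (hence $\delta=\tfrac{k(k-1)}{n}\,t(n)=\omega(n^{-1/3})$), $\binom{n}{k}=\tfrac{n^{k}}{k!}(1+o(1))$, and $k!/(k(k-1))=(k-2)!$,
\[
  \E{Z_{e,r}}\;\le\;\binom{n}{v}\,N_k(e,r)\,\Big(\tfrac{m}{\binom{n}{k}}\Big)^{e}
  \;\le\;\frac{n^{v}}{v!}\;N_k(e,r)\;\big((k-2)!\,(1-\delta)\big)^{e}\,n^{-(k-1)e}\,(1+o(1))^{e};
\]
and since $e\le m<n$ the term $(1+o(1))^{e}$ is $O_{k}(1)$.

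The combinatorial heart is the estimate $N_k(e,r)\le v!\,\big((k-2)!\big)^{-e}\,P_k(e)$ for a polynomial $P_k$ of degree at most $2$. This is the hypergraph analogue of the fact that theta-subgraphs---counted $\Theta(v^{2})\cdot v!$ ways on $v$ labeled vertices---are the dominant obstruction to subcriticality in $G(n,m)$. I would prove it via a structural decomposition: the degree identity $\sum_{x}\deg(x)=ke$ together with $v=(k-1)e-r$ forces an edge-minimal positive-excess hypergraph to have only $O_{k}(1)$ vertices of degree $\ge 3$ and $O_{k}(1)$ edges meeting three or more vertices of degree $\ge 2$; after removing these ``kernel'' elements what remains is a bounded number $b\le 3$ of \emph{hyperpaths} (runs of edges glued at single degree-$2$ vertices, the remaining slots filled by degree-$1$ vertices), and a hyperpath of $s$ edges can be laid down on a given labeled vertex set in at most $\big((k-2)!\big)^{-s}\cdot(\#\text{its vertices})!$ ways. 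There are only $O_{k}(1)$ kernel topologies---the hypergraph analogues of the theta, figure-eight and dumbbell kernels, plus a few exceptional small configurations such as two edges overlapping in three vertices---, and for each of them the choice of the $O_{k}(1)$ kernel vertices, the distribution of the remaining vertices among the $b$ hyperpaths, and the internal arrangement of each hyperpath together contribute $v!\,\big((k-2)!\big)^{-e}$ times a polynomial of degree $b-1\le 2$ (the factor $v^{O_k(1)}$ from placing the kernel vertices being absorbed into $v!$); the degree-$2$ case comes from the theta and dumbbell kernels.

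Granting this, the $n^{v}/v!$ and the bound on $N_k(e,r)$ cancel the factorials and the $((k-2)!)^{e}$, leaving $\E{Z_{e,r}}\le O_{k}(1)\,n^{-r}\,P_k(e)\,(1-\delta)^{e}$. Since $\sum_{e\ge 2}P_k(e)(1-\delta)^{e}=O_{k}(\delta^{-3})$ and the $r\ge 2$ terms carry an extra $n^{-(r-1)}$,
\[
  \sum_{e\ge 2}\sum_{r=1}^{k-1}\E{Z_{e,r}}\;=\;O_{k}\!\Big(\frac{1}{n\,\delta^{3}}\Big)\;=\;O_{k}\!\Big(\frac{n^{2}}{t(n)^{3}}\Big)\;\longrightarrow\;0
\]
because $t(n)=\omega(n^{2/3})$. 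I expect the main difficulty to be the structural step---establishing that the polynomial degree really is exactly $2$ uniformly in $k$, and enumerating the kernel topologies (in particular the small extra obstructions that appear only for $k\ge 4$); the rest is routine bookkeeping with Stirling's formula and the standard transfer between the $H_{n,m;k}$ and binomial models. Of course, since the statement is Theorem~4 of \cite{KL-hypergraph-excess}, one may instead simply invoke it; the above records how one would reprove it from scratch.
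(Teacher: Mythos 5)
The paper does not prove this statement: it is quoted verbatim as Theorem 4 of Karo\'nski and \L uczak \cite{KL-hypergraph-excess} and used as a black box, so your proposal is really competing with the cited source rather than with anything in the paper. Your route---a first-moment bound on edge-minimal connected sub-hypergraphs of positive excess---is a legitimate and conceptually more elementary alternative to the asymptotic enumeration via generating functions that Karo\'nski and \L uczak actually carry out, and your closing arithmetic is internally consistent: with $\deg P_k\le 2$ one gets $\sum_e P_k(e)(1-\delta)^e=O_k(\delta^{-3})$, hence $\sum_{e,r}\E{Z_{e,r}}=O_k(n^2/t(n)^3)$, which vanishes precisely when $t(n)=\omega(n^{2/3})$, matching the stated window exactly. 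What the enumeration approach buys in return is much finer information inside the critical scaling window (constant-probability behaviour at $m=\tfrac{n}{k(k-1)}\pm\Theta(n^{2/3})$) and the supercritical companion result (Theorem 10, quoted just after).

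That said, the two steps you leave as sketches are where the actual work lies, not routine bookkeeping. (i) The reduction to excess $r\le k-1$ needs the lemma that a connected hypergraph of positive excess has an edge whose deletion (together with its degree-$1$ vertices) preserves connectivity---the hypergraph analogue of a non-bridge edge in a cyclic graph. I believe this, but it is not immediate (removing an edge of a connected hypergraph can disconnect it even when the hypergraph is far from a hypertree, and one must then argue about the excesses of the resulting pieces) and it must be supplied. (ii) The estimate $N_k(e,r)\le v!\,\bigl((k-2)!\bigr)^{-e}P_k(e)$ with $\deg P_k\le 2$, uniformly over $e$ and over $1\le r\le k-1$, is the heart of the proof, and the kernel/hyperpath decomposition is only gestured at. Note that degree $2$ is \emph{exactly} what the $n^{2/3}$ window demands: any slack (degree $3$, say) would silently weaken the conclusion to $t(n)=\omega(n^{3/4})$. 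So this lemma cannot be waved through; in particular the sporadic small obstructions for $k\ge 4$ (two edges overlapping in $\ge 2$ vertices, for instance) must be checked not to raise the polynomial degree, and the implicit constants must be independent of $e$.
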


\noindent \textbf{Remark.} The previous argument establishes that hypertrees and
unicyclic components can be $(k-1,1)$-oriented, although the
running time for computing the orientation may increase with the component size.  The earlier
result of the second author established that in expectation, this
could be done efficiently for $m = (1-\epsilon) \frac{n}{k(k-1)}$ in the case $k=3$.
The connection above complements the earlier result by establishing
feasibility, although not necessarily efficiency, when the number of
edges differs from $\frac{n}{k(k-1)}$ by a sublinear term.

\begin{theorem} (Theorem 10 in \cite{KL-hypergraph-excess}.)  Let $k \geq
  3$ be a fixed integer, and let $m = \frac{n}{k(k-1)} + t(n)$, where
  $t(n)$ is any function of higher order than $n^{2/3}$ but smaller order
  than $n^{2/3} \big( \frac{\log n}{\log \log n} \big)^{1/3}$.  Then with
  high probability, $H_{n,m;k}$ consists of one large complex component and
  some number of small components which are either hypertrees or unicyclic.
\end{theorem}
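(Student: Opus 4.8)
This is the hypergraph analogue of {\L}uczak's description of $G(n,(1+\epsilon)/n)$ just above the critical point, and the plan is to transport that analysis. Write $m = \frac{n}{k(k-1)} + t$ and introduce the \emph{effective supercriticality} $\epsilon = \epsilon(n) := \frac{k(k-1)\,t}{n}$: since a vertex of $H_{n,m;k}$ lies in $\approx \frac{km}{n}$ hyperedges, each carrying $k-1$ fresh vertices, a breadth-first exploration of a component behaves like a Galton--Watson process of offspring mean $1 + \epsilon(1+o(1))$, so the hypergraph is critical at $\epsilon = 0$, i.e.\ at $m = \frac{n}{k(k-1)}$. The two hypotheses on $t$ translate into $\epsilon^3 n \to \infty$ and $\epsilon^3 n = o(\log n / \log\log n)$, which is precisely the ``barely supercritical'' window.

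First I would establish the \emph{one large component} statement by coupling the component-exploration, from above and below, with branching processes of offspring mean $1 \pm \epsilon(1+o(1))$, being careful that the pool of unexplored vertices stays $(1-o(1))n$ over the relevant range. Standard near-critical branching-process estimates then give that, with high probability, there is a single component of size $\Theta(\epsilon n) = \Theta(t)$, while the exploration from any vertex outside it dies within $O\big(\epsilon^{-2}\log(\epsilon^3 n)\big)$ steps; since $\epsilon^3 n \to \infty$, the latter bound is $o(t)$, so the giant is cleanly separated from everything else.

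Next I would pin down the \emph{excesses}. The giant is complex because the exploration that builds it closes $\Theta(\epsilon^3 n) = \omega(1)$ hyperedges beyond a spanning hypertree, giving excess $\omega(1) > 0$. For the small components one wants excess $\le 0$, i.e.\ hypertree or unicyclic, and the natural route is a first-moment bound: the expected number of connected sub-hypergraphs of $H_{n,m;k}$ on a prescribed $s$-set, of excess exactly $\ell\ge 1$, that form their own component --- summed over $\ell \ge 1$ and over all $s$ below the small-component scale of the previous step --- should be $o(1)$. The inputs are a hypergraph analogue of the classical enumeration bound (there are at most $s^{\,s-2+3\ell/2}$ connected graphs on $s$ vertices with $s-1+\ell$ edges), the fact that a prescribed $k$-set is a hyperedge with probability $\Theta(n^{1-k})$, and the fact that isolating an $s$-vertex component forces $\Theta(s)$ incidences to be missing, which --- after the familiar $e^{s}/s!$ cancellation --- contributes an $e^{-\Omega(\epsilon^2 s)}$ factor; these should combine to a total of $O(1/(\epsilon n)) = O(1/t) = o(1)$. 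Putting the two steps together gives the theorem.

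The hard part will be the last first-moment estimate, made to hold \emph{uniformly} in the excess $\ell$ and all the way up to the small-component scale --- note that in this window that scale exceeds $\sqrt n$, so the ``isolation'' and ``enumeration'' factors in the union bound nearly cancel and a crude bound is not enough. One must either sharpen the enumeration of connected complex $k$-uniform hypergraphs, or (as Karo\'nski and {\L}uczak in effect do) bound the excess of each small component directly by reading it off the exploration process rather than from a union bound. This is exactly where the quantitative restriction on $t$ enters --- in particular the upper bound $t = o\big(n^{2/3}(\log n/\log\log n)^{1/3}\big)$ is what keeps these estimates, and the sharper conclusions of the cited theorem, valid.
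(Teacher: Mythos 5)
The paper does not prove this statement: it is quoted verbatim as Theorem~10 of Karo\'nski and \L uczak \cite{KL-hypergraph-excess}, and the authors simply import it as a black box to pin down the threshold. So there is no in-paper proof to compare against; the honest answer here was to cite the source, and the proposal should have noted this.

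Taken on its own terms, your sketch follows the right playbook (\L uczak-style near-critical branching-process couplings for the exploration, plus first-moment bounds on the complex small components), and the translation of the hypotheses into $\epsilon^3 n \to \infty$ and $\epsilon^3 n = o(\log n / \log\log n)$ is correct. But the substance is missing in exactly the places that matter. First, the step that shows all small components have excess $\le 0$ --- the first-moment bound over all excesses $\ell\ge 1$ and all component sizes up to $\Theta(\epsilon^{-2}\log(\epsilon^3 n))$ --- is acknowledged to be ``the hard part'' and then left blank; you even note that a crude enumeration/isolation tradeoff does not close, which is to say the sketch does not actually contain a proof of this step. Second, the theorem asserts there is \emph{one} large complex component, and the upper restriction on $t(n)$ is precisely what controls the number of large complex components (the paper says as much in the remark immediately following the theorem). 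Your sketch gestures at this but never explains how uniqueness is obtained, nor where the $(\log n/\log\log n)^{1/3}$ cap enters the estimates. Third, the assertion that the giant has excess $\Theta(\epsilon^3 n)$ is itself a theorem of \L uczak type whose hypergraph analogue needs its own argument (or a citation); you invoke it as if it were obvious. In short: right framework, but the technical core --- the uniform first-moment bound, the uniqueness of the complex component, and the complexity of the giant --- is outlined rather than proved, and those are exactly the points where Karo\'nski and \L uczak's paper does real work.
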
 

\noindent \textbf{Remark.} Clearly, adding more edges only creates more
complex components, so the upper bound on $t(n)$ plays a role only in
limiting the number of large complex components.

\medskip

Therefore, as soon as we exceed $\frac{n}{k(k-1)}$ by even a
sublinear deviation, an obstruction appears, and hence
$(k-1,1)$-orientability fails.  Note that we cannot bound the size
of the complex component, and in fact its size grows with $n$.  There
remains a window of width roughly $n^{2/3}$ between the lower and
upper bounds.  It is worth noting that for the case of graphs, this is also
well-understood, and when $m = \frac{n}{k(k-1)} + cn^{2/3}$ for (positive
or negative) constants $c$, there is a constant probability of having a
complex component.  See, e.g., the discussion in the book
\cite{AlonSpencer}.

\section{High-probability running time bound}

In this section we present and analyze a simple algorithm for finding
$(k-1,k)$-orientations.  The first step is to observe that the running time
to orient each new edge is $O(k^2 s)$, where $s$ is the size of the
connected component formed by the new edge.

\subsection{Algorithm description}\label{sec:algorithm}

The algorithm works by iteratively extending an orientation to more and
more edges. We assume that vertices of each edge $e$ can be traversed using
methods $e$.first() (which returns an arbitrary vertex) and $e$.next$(v)$
(which gives the next node in the order after $v$, cycling back to
$e$.first() when all vertices have been traversed).  For $v\in V$ let
$T[v]$ refer to the edge that is oriented towards $v$, where $T[v]=\bot$ if
no edge is oriented towards~$v$. We maintain an array indexed by $V$ that
initially has all entries set to $\bot$. An edge $e$ is directed to $k-1$
vertices by calling the following procedure, generalizing the procedure
of~\cite{AP11}. We use the notation $\leftrightarrow$ to indicate exchange
of two variable values.

\begin{minipage}{\linewidth}
\begin{tabbing}
  xx\=xx\=xx\=xx\=xx\=\kill
  \+\+\\
  {\bf procedure} {\sc Orient}$(e)$\+\\
    {\bf for} $i:=1$ to $k-1$ {\bf do} \+\\
	$\tau = e$\\
	$v = e$.first()\\
	{\bf while} $\tau\ne\bot$\+\\
	   $v = \tau$.next($v$)\\
       $\tau \leftrightarrow T[v]$\-\\
    {\bf end while}\-\\
  {\bf end for}
\end{tabbing}
\end{minipage}

\medskip

When {\sc Orient} is called, each member of the set of previously oriented
edges $E_1$ appears $k-1$ times in $T$. The procedure runs a while loop
$k-1$ times that (if it terminates) inserts $e$ in $T[v]$ for some $v\in
e$, while ensuring that each edge $e'\in E_1$ is still oriented towards
$k-1$ positions in $T$. The invariant of the while loop is that all edges
in $E_1$ are oriented towards $k-1$ vertices, and $e$ is oriented towards
$i$ vertices, with one exception: If $\tau\ne\bot$ the edge $\tau$ which is
oriented towards one vertex less. Clearly, once $i=k-1$ and $\tau = \bot$
we have oriented all edges in $E_1 \cup \{e\}$.

We claim that the procedure always terminates if an orientation exists, and
more specifically that the time spent if $e$ is in a component of size $s$
is $O(k^2 s)$. (Some stopping criterion is needed for termination in case no
orientation exists, but this is left out for simplicity.) Suppose the while
loop does not stop, i.e., it goes through an infinite sequence of edges.
Let $e_1,e_2,e_3,\dots$ denote this edge sequence, with consecutive
identical edges combined into a single occurrence. We observe that there
can be at most $k-1$ consecutive iterations involving a particular edge.
Notice also that edge $e_i$ shares at least one vertex with edge $e_{i+1}$
for each $i$. Consider a minimal subsequence $e_i,\dots,e_j$ containing 3
such occurrences of some edge, and without loss of generality, assume that
$e=1$. Let $\ell_1$ and $\ell_2$, $1\leq \ell_1<\ell_2<j$, be the indexes
of the edge in this subsequence that first appears for the second time (so
$\ell_2$ is minimal).  Since all previously fully-oriented edges already
are oriented towards all but one of their $k$ vertices, one observes that
then $e_{\ell_{2}+t}=e_{\ell_{1}-t}$ for $t=0,\dots,\ell_{1}-1$. This means
that the $\ell_2-1$ distinct edges $e_1,\dots,e_{\ell_{2}-1}$ contain
exactly $(\ell_2-1)(k-1)$ distinct vertices. From $e_{\ell_2}$ to
$e_{\ell_2+\ell_1-1} = e_1$, the edges encountered are all repeats (in
reverse order) of those already seen.  After $e_{\ell_{2}+\ell_{1}-1}$
(which is equal to $e_1$) each new edge introduces at most $k-1$ new
vertices until we reach an edge that overlaps with a previously visited
edge and only $k-2$ vertices are introduced. At that point we have visited
a set of edges having less than $k-1$ available vertices on average,
meaning that no $(k-1,k)$-orientation exists.

Therefore, the length of the edge sequence is at most $2s$, while the
number of consecutive identical edges consolidated into each element is at
most $k-1$.  Since the while loop is run $k-1$ times for each new edge to
orient, we conclude that the full orientation of the edge completes in $O(k^2
s)$ time.

\subsection{Probabilistic tools}

We will need the following version of the Chernoff bound (see e.g.~\cite{MRbook}):

\begin{theorem}
  \label{thm:chernoff}
  For any $0 < \epsilon < 1$, every binomial random variable $X$ with
  mean $\mu$ satisfies
  \begin{displaymath}
    \pr{X < (1-\epsilon) \mu} < e^{-\frac{\epsilon^2}{2} \mu}
    \quad\quad
    \text{and}
    \quad\quad
    \pr{X > (1+\epsilon) \mu} < e^{-\frac{\epsilon^2}{3} \mu} \,.
  \end{displaymath}
\end{theorem}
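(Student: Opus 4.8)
The plan is to establish both tails by the standard exponential-moment (Chernoff) method. Write the binomial $X \sim \bin{N,p}$ with $Np = \mu$ as a sum $X = \sum_{i=1}^{N} Y_i$ of i.i.d.\ Bernoulli$(p)$ variables; the argument in fact goes through verbatim for any sum of independent, not necessarily identical, Bernoullis, i.e.\ for $\mu = \sum_i p_i$. The one computational input is the moment generating function bound: for every real $t$, $\E{e^{tX}} = \prod_i \E{e^{tY_i}} = \prod_i\bigl(1 + p_i(e^t-1)\bigr) \le \prod_i e^{p_i(e^t-1)} = e^{\mu(e^t-1)}$, using $1+x \le e^x$.

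For the upper tail, apply Markov's inequality to $e^{tX}$ with $t>0$: $\pr{X > (1+\epsilon)\mu} \le e^{-t(1+\epsilon)\mu}\,\E{e^{tX}} \le e^{\mu(e^t-1)-t(1+\epsilon)\mu}$. Minimizing the exponent over $t$ gives $t = \ln(1+\epsilon) > 0$ and the bound $\bigl(e^{\epsilon}(1+\epsilon)^{-(1+\epsilon)}\bigr)^{\mu}$. It then suffices to check the elementary inequality $(1+\epsilon)\ln(1+\epsilon) - \epsilon \ge \epsilon^2/3$ for $0 < \epsilon < 1$: letting $g(\epsilon) = (1+\epsilon)\ln(1+\epsilon) - \epsilon - \epsilon^2/3$ we have $g(0)=0$ and $g'(\epsilon) = \ln(1+\epsilon) - 2\epsilon/3$, which is nonnegative on $[0,1]$ since it vanishes at $0$, equals $\ln 2 - 2/3 > 0$ at $\epsilon = 1$, and is unimodal there (its derivative $\tfrac{1}{1+\epsilon} - \tfrac23$ changes sign only at $\epsilon = \tfrac12$). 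Exponentiating yields $\pr{X > (1+\epsilon)\mu} < e^{-\epsilon^2\mu/3}$.

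For the lower tail, apply Markov's inequality to $e^{-tX}$ with $t>0$: $\pr{X < (1-\epsilon)\mu} \le e^{t(1-\epsilon)\mu}\,\E{e^{-tX}} \le e^{\mu(e^{-t}-1)+t(1-\epsilon)\mu}$, minimized at $t = -\ln(1-\epsilon) > 0$ to give $\bigl(e^{-\epsilon}(1-\epsilon)^{-(1-\epsilon)}\bigr)^{\mu}$. Here one uses $(1-\epsilon)\ln(1-\epsilon) \ge -\epsilon + \epsilon^2/2$ for $0 < \epsilon < 1$, which is immediate from the power series $(1-\epsilon)\ln(1-\epsilon) = -\epsilon + \sum_{j\ge2}\frac{\epsilon^{j}}{j(j-1)} = -\epsilon + \frac{\epsilon^2}{2} + \frac{\epsilon^3}{6} + \cdots$, all of whose terms past the first are nonnegative. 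Exponentiating gives $\pr{X < (1-\epsilon)\mu} < e^{-\epsilon^2\mu/2}$.

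Since this is a completely classical estimate, essentially none of the steps presents a genuine obstacle; the only point requiring a modicum of care is selecting the particular elementary bounds on $(1\pm\epsilon)\ln(1\pm\epsilon)$ that produce the stated constants $\tfrac12$ and $\tfrac13$ exactly (sharper constants are available but not needed here). If one prefers, the statement can simply be quoted from a standard reference such as \cite{MRbook}.
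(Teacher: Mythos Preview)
Your argument is correct and is the standard exponential-moment derivation of the multiplicative Chernoff bounds; the two elementary inequalities you isolate for $(1\pm\epsilon)\ln(1\pm\epsilon)$ are exactly what is needed to land on the constants $\tfrac13$ and $\tfrac12$, and your verifications of them are clean. One tiny remark: the strict inequality in the theorem follows because your calculus shows $g(\epsilon)>0$ strictly on $(0,1)$ (and similarly for the lower tail), so that the Chernoff exponent strictly beats $-\epsilon^2\mu/3$ (resp.\ $-\epsilon^2\mu/2$) whenever $\mu>0$; the degenerate case $\mu=0$ is trivial.

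As for comparison with the paper: there is nothing to compare. The paper does not prove this theorem at all; it is stated as a tool with a pointer to \cite{MRbook}, exactly as you note in your final sentence. Your write-up therefore supplies strictly more than the paper does. If you are matching the paper's style, the appropriate ``proof'' is simply the citation.
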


A {\em supermartingale\/} is a sequence
$X_0, X_1, \ldots$ of random variables such that each conditional
expectation $\E{X_{t+1} \mid X_0, \ldots, X_t}$ is at most $X_t$.
Azuma's inequality (see e.g.~\cite{MRbook}) states the following:

\begin{theorem}
  \label{thm:azuma}
  Let $X_0, \ldots, X_n$ be a supermartingale, with bounded differences
  $|X_{t+1} - X_t| \leq C$.  Then for any $\lambda \geq 0$,
  \begin{displaymath}
    \pr{X_n \geq X_0 + \lambda}
    \ \leq \
    \exp\left\{-\frac{\lambda^2}{2 C^2 n}\right\}.
  \end{displaymath}
\end{theorem}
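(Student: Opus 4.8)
The plan is to adapt the exponential-moment (Chernoff-style) argument to supermartingales. First I would pass to the increments: setting $Y_t = X_t - X_{t-1}$, the supermartingale hypothesis becomes $\E{Y_t \mid X_0, \ldots, X_{t-1}} \leq 0$ and the bounded-difference hypothesis becomes $|Y_t| \leq C$ almost surely. We may assume $C > 0$, $n \geq 1$, and $\lambda > 0$, since otherwise the statement is trivial (a probability never exceeds $1 = e^0$).

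The heart of the matter is a \emph{one-step estimate}: if $Y$ satisfies $|Y| \leq C$ and $\E{Y \mid \mathcal{F}} \leq 0$ for a $\sigma$-field $\mathcal{F}$, then $\E{e^{sY} \mid \mathcal{F}} \leq e^{s^2 C^2/2}$ for every $s > 0$. I would prove this by convexity: on $[-C,C]$ the graph of $y \mapsto e^{sy}$ lies below its chord through $(-C, e^{-sC})$ and $(C, e^{sC})$, so pointwise $e^{sY} \leq \frac{C-Y}{2C}\, e^{-sC} + \frac{C+Y}{2C}\, e^{sC}$. Taking conditional expectation and collecting terms gives $\E{e^{sY}\mid\mathcal{F}} \leq \cosh(sC) + \frac{\E{Y \mid \mathcal{F}}}{2C}\bigl(e^{sC} - e^{-sC}\bigr)$; since $s, C > 0$, the multiplier on $\E{Y \mid \mathcal{F}} \le 0$ in the second summand is nonnegative, so that summand is $\leq 0$, leaving $\E{e^{sY}\mid\mathcal{F}} \leq \cosh(sC)$. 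Finally $\cosh(sC) \leq e^{s^2C^2/2}$ follows by comparing power series term by term, using $(2j)! \geq 2^j\, j!$.

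Next I would iterate. For $Z_t = e^{s(X_t - X_0)}$, conditioning on the first $t-1$ steps gives $\E{Z_t \mid X_0, \ldots, X_{t-1}} = Z_{t-1}\,\E{e^{sY_t} \mid X_0, \ldots, X_{t-1}} \leq Z_{t-1}\, e^{s^2C^2/2}$; taking expectations and inducting on $t$ yields $\E{e^{s(X_n - X_0)}} \leq e^{n s^2 C^2/2}$. Then Markov's inequality applied to the nonnegative variable $e^{s(X_n - X_0)}$ gives, for every $s > 0$, $\pr{X_n \geq X_0 + \lambda} = \pr{e^{s(X_n-X_0)} \geq e^{s\lambda}} \leq e^{-s\lambda}\,\E{e^{s(X_n-X_0)}} \leq \exp\{-s\lambda + ns^2C^2/2\}$. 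Choosing $s = \lambda/(nC^2)$, which minimizes the quadratic in the exponent, produces exactly $\exp\{-\lambda^2/(2C^2 n)\}$, as claimed.

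I expect the one-step estimate to be the only real obstacle, and within it the delicate point is using the supermartingale \emph{inequality} (rather than a martingale equality) correctly: one must check that, after taking conditional expectations in the chord bound, the coefficient multiplying $\E{Y \mid \mathcal{F}}$ is nonnegative — which is precisely the statement that $e^{sC} \geq e^{-sC}$ for $s > 0$ — so that the one-sided hypothesis $\E{Y\mid\mathcal{F}}\le 0$ pushes the bound in the favorable direction. Everything else is routine bookkeeping.
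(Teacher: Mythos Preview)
Your argument is correct and is the standard exponential-moment proof of the Azuma--Hoeffding inequality for supermartingales; the one-step Hoeffding lemma via the chord bound, the handling of the sign of $\E{Y\mid\mathcal{F}}$, the iteration, and the optimization in $s$ are all carried out properly. There is nothing to compare against in the paper, however: the paper does not prove Theorem~\ref{thm:azuma} but merely states it as a standard probabilistic tool with a reference (``see e.g.~\cite{MRbook}''). So your proposal supplies strictly more than the paper does here.
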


\subsection{Analysis of random hypergraphs}

Throughout, we impose explicit bounds that keep $n$ ``sufficiently large''
in order to simplify our calculations.  Recall that $H_{n,m;k}$ is the
random $k$-uniform hypergraph obtained by uniformly sampling one with
$n$-vertices and $m$ hyperedges.  In this section, it will be substantially
more convenient for us to work with a process that exhibits more
independence.  Specifically, we consider instead the following sequential
process, which fortunately is quite similar to the original $H_{n,m;k}$.

\begin{lemma}
  \label{lem:process-Hnm}
  Let $n > k \geq 2$, with $n > 2000$.  Consider the random
  hypergraph process $H_0, H_1, \ldots$, where $H_0$ is the empty
  hypergraph with $n$ isolated vertices.  At each time $t$, sample $k$
  vertices independently and uniformly at random.  If they are
  distinct, and form a hyperedge which does not yet appear in $H_t$,
  then add it to form $H_{t+1}$.  Otherwise, let $H_{t+1} = H_t$.
  Then, with probability at least $1 - n^{-1}$, in the first
  $\frac{n}{k(k-1)}$ rounds, the number of times that we do not add is
  at most $\log n$.
\end{lemma}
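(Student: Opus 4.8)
The plan is to let $F$ denote the number of non-additions during the first $T:=\lfloor n/(k(k-1))\rfloor$ rounds, and to show that $\{F>\log n\}$ is very unlikely by combining a uniform upper bound on the per-round failure probability with a crude union-bound tail estimate. The reason this works easily is that the expected number of non-additions is $O(1)$, whereas $(\log n)!$ grows faster than every polynomial, so a first-moment-style argument already pushes the probability below $n^{-1}$.

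First I would bound, for an \emph{arbitrary} history $H_0,\dots,H_{t-1}$, the conditional probability that round $t$ (for $1\le t\le T$) is a non-addition. This occurs only if either (i) the $k$ i.i.d.\ uniform samples are not all distinct, or (ii) they are distinct but form a hyperedge already present in $H_{t-1}$. A union bound over pairs of coordinates gives $\pr{\text{(i)}}\le\binom{k}{2}/n$; and since $H_{t-1}$ has fewer than $T$ edges and each fixed $k$-set is realized by the ordered sample with probability $k!/n^{k}$, we get $\pr{\text{(ii)}}<T\,k!/n^{k}=(k-2)!/n^{k-1}$. Hence, conditioned on any history, round $t$ is a non-addition with probability at most $p:=\binom{k}{2}/n+(k-2)!/n^{k-1}$, and a short computation (using $n>k$, so $(k-2)!\le n^{k-2}$, hence $Tp\le\tfrac12+\tfrac1{k(k-1)}$) shows $Tp\le 1$ whenever $n>2000$ and $k\ge2$.

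Next I would turn this into a tail bound for $F$. For any fixed set $S$ of $\ell$ rounds, processing the rounds of $S$ in increasing order and iterating the conditional estimate above gives $\pr{\text{round }t\text{ is a non-addition for every }t\in S}\le p^{\ell}$. Summing over the $\binom{T}{\ell}$ choices of $S$ yields $\pr{F\ge\ell}\le\binom{T}{\ell}p^{\ell}\le (Tp)^{\ell}/\ell!\le 1/\ell!$. (Equivalently, $F$ is stochastically dominated by a $\mathrm{Bin}(T,p)$ variable, to which one could instead apply a Chernoff/Poisson tail; the factorial estimate already suffices.) Taking $\ell=\lceil\log n\rceil\ge\log n$ gives $\pr{F>\log n}\le\pr{F\ge\ell}\le 1/\ell!$.

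It then remains to verify $\ell!\ge n$ for $n>2000$. From $e^{\ell}\ge\ell^{\ell}/\ell!$ we have $\ell!\ge(\ell/e)^{\ell}$; since $x\mapsto(x/e)^{x}$ is increasing for $x>1$ and $\ell\ge\log n$, it suffices that $(\log n/e)^{\log n}\ge n$, i.e.\ $\log(\log n/e)\ge 1$, i.e.\ $\log n\ge e^{2}$, i.e.\ $n\ge e^{e^{2}}$; as $e^{e^{2}}<2000<n$, this holds. Thus $\pr{F>\log n}\le 1/\ell!\le 1/n$, which is the lemma. The only points requiring attention are the conditioning in the first step — the process is not a sequence of independent trials, and it is the uniform bound $p$ over all histories that makes the union bound legitimate — and the two elementary inequalities $Tp\le1$ and $(\log n)!\ge n$ for $n>2000$; neither of these is a genuine obstacle.
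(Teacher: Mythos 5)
Your proof is correct and follows essentially the same route as the paper's: a uniform-over-histories bound on the per-round failure probability (split into the same two events), followed by a union bound over the $\binom{T}{\ell}$ choices of failing rounds, and a crude tail estimate showing the result is below $n^{-1}$ once $n > e^{e^2}\approx 1619 < 2000$. The only differences are cosmetic bookkeeping — you keep $p$ as a two-term sum and use $\binom{T}{\ell}p^\ell \le (Tp)^\ell/\ell! \le 1/\ell!$ with Stirling, whereas the paper absorbs both terms into $k(k-1)/n$ and writes $\binom{T}{s}\le (eT/s)^s$ — but these are the same estimate in slightly different notation.
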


\noindent \textbf{Proof.}\, At time $t+1$, a union bound shows that
the probability that the $k$ sampled vertices are not distinct is at
most
\begin{displaymath}
  \frac{1}{n} + \frac{2}{n} + \cdots \frac{k-1}{n}
  = \frac{k(k-1)}{2n} \,.
\end{displaymath}
Also, since $t < \frac{n}{k(k-1)}$, the probability that the $k$
sampled vertices form a previously-added hyperedge is
\begin{displaymath}
  \frac{t k!}{n^k}
  <
  \frac{(k-2)!}{n^{k-1}} < \frac{1}{n} \,,
\end{displaymath}
where we used $n > k$ for the final bound.  Thus the probability that
$H_{t+1} = H_t$ is at most $\frac{k(k-1)}{n}$, and so the probability
that this happens at least $s = \log n$ times in the first
$\frac{n}{k(k-1)}$ rounds is at most
\begin{displaymath}
  \binom{ \frac{n}{k(k-1)} }{ s }
  \left(
    \frac{k(k-1)}{n}
  \right)^{s}
  \leq
  \left(
    \frac{e \cdot \frac{n}{k(k-1)}}{s}
  \right)^{s}
  \left(
    \frac{k(k-1)}{n}
  \right)^{s} 
  =
  \left(
    \frac{e}{s}
  \right)^{s} 
  =
  \left(
    \frac{e}{\log n}
  \right)^{\log n} \,,
\end{displaymath}
which is below $n^{-1}$ for all $n > e^{e^2}$.  \hfill $\Box$

\bigskip

It is sometimes more convenient to work with the related model
$H_{n,p;k}$, which is the random $k$-uniform hypergraph formed by
taking each of the $\binom{n}{k}$ potential hyperedges independently
with probability $p$.  Fortunately, the behavior of $H_{n,p;k}$
closely approximates that of $H_{n,m;k}$.

\begin{lemma}
  \label{lem:coupling}
  Assume that $0 < \epsilon < \frac{1}{2}$, $k \geq 2$, and
  $\frac{n}{\log n} > \frac{100 k^2}{\epsilon^2}$.  Let $m = (1 -
  \epsilon) \frac{n}{k(k-1)}$, and $p = (1 - 0.8 \epsilon)
  \frac{(k-2)!}{n^{k-1}}$.  Then we may couple the probability spaces
  such that $H_{n,m;k}$ is contained in $H_{n,p;k}$ with probability
  at least $1 - n^{-1}$.
\end{lemma}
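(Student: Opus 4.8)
The plan is to invoke the classical relationship between the binomial model $H_{n,p;k}$ and the uniform model $H_{n,m;k}$. The starting point is that, conditioned on $H_{n,p;k}$ having exactly $N$ hyperedges, it is distributed uniformly among all $k$-uniform hypergraphs on $n$ vertices with $N$ edges; hence, when $N \geq m$, selecting a uniformly random subset of $m$ of its edges reproduces $H_{n,m;k}$ exactly. This suggests the monotone coupling: sample $H_{n,p;k}$, let $N$ be its number of edges, and if $N \geq m$ obtain $H_{n,m;k}$ by deleting a uniformly random set of $N-m$ edges; if $N < m$, sample $H_{n,m;k}$ independently from its own law. A short symmetry check confirms that the second coordinate indeed has the distribution of $H_{n,m;k}$ in both cases, so this is a valid coupling, and by construction $H_{n,m;k} \subseteq H_{n,p;k}$ whenever $N \geq m$. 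It therefore suffices to prove $\pr{N < m} \leq n^{-1}$, where $N \sim \bin{\binom{n}{k}, p}$.

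The next step is to estimate $\mu := \E{N} = \binom{n}{k} p$. Using $\binom{n}{k} = \frac{1}{k!} n(n-1)\cdots(n-k+1)$ and $(k-2)!/k! = 1/(k(k-1))$, a direct calculation gives
\[
  \mu = (1 - 0.8\epsilon)\,\frac{n}{k(k-1)}\,\prod_{i=1}^{k-1}\Bigl( 1 - \frac{i}{n} \Bigr) \,.
\]
The product is at least $1 - \frac{k(k-1)}{2n} \geq 1 - \frac{k^2}{2n}$, and the hypothesis $\frac{n}{\log n} > \frac{100 k^2}{\epsilon^2}$ makes $\frac{k^2}{n}$ (hence the deviation of the product from $1$) far smaller than $\epsilon$, so $\mu \geq (1 - 0.801\epsilon)\frac{n}{k(k-1)}$.

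Now I would compare $\mu$ with $m = (1 - \epsilon)\frac{n}{k(k-1)}$. Since $p$ carries the factor $1 - 0.8\epsilon$ instead of $1 - \epsilon$, there is a multiplicative gap of order $\epsilon$, and the estimate above yields $m \leq (1 - c\epsilon)\mu$ for an absolute constant $c$ (e.g.\ $c = 0.19$). The lower-tail Chernoff bound of Theorem~\ref{thm:chernoff} then gives
\[
  \pr{N < m} \;\leq\; \pr{N < (1 - c\epsilon)\mu} \;<\; \exp\Bigl( -\tfrac{(c\epsilon)^2}{2}\,\mu \Bigr) \,.
\]
Finally, bounding $\mu \geq 0.59\,\frac{n}{k^2}$ (using $\epsilon < \tfrac12$ and $k(k-1) \leq k^2$) and substituting, the exponent is at least a fixed constant multiple of $\frac{\epsilon^2 n}{k^2}$, which exceeds $\log n$ precisely because $\frac{n}{\log n} > \frac{100 k^2}{\epsilon^2}$; hence $\pr{N < m} \leq n^{-1}$, as desired.

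The only real obstacle is careful bookkeeping of the constants: the slack between the $1 - 0.8\epsilon$ defining $p$ and the $1 - \epsilon$ defining $m$ is only about $0.2\epsilon$, and after absorbing the product correction and paying the quadratic loss in the Chernoff exponent, this slack must still beat $\frac{\log n}{n}$ --- which is exactly what the constant $100$ in the hypothesis buys. Conceptually the argument is routine: the coupling is entirely standard and the remainder is elementary estimation.
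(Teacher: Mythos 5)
Your proof is correct and takes essentially the same approach as the paper: both reduce the lemma to the lower-tail concentration of the edge count $N \sim \bin{\binom{n}{k},p}$ via a monotone coupling of the two models, and then apply the Chernoff bound with the same split of the $0.2\epsilon$ slack (paper uses $0.81$/$0.19$, you use $0.801$/$0.19$). The only cosmetic difference is the coupling itself: the paper uses a random permutation of the $\binom{n}{k}$ potential edges (so containment is automatic, and the failure event is just $X<m$), while you sample $H_{n,p;k}$ first and then delete a uniformly random set of $N-m$ edges, handling $N<m$ as a separate branch — these are standard equivalent presentations of the same idea.
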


\noindent \textbf{Proof.}\, Couple the probability spaces with the
random hypergraph process, which corresponds to a uniformly random
permutation of all $\binom{n}{k}$ potential edges.  Then, $H_{n,m;k}$
corresponds to forming a hypergraph with precisely the first $m$
edges in this permutation, and $H_{n,p;k}$ corresponds to generating
an independent random variable $X \sim \bin{\binom{n}{k}, p}$, and
then taking the first $X$ edges in the permutation.

Therefore, it suffices to show that $X \geq (1-\epsilon)
\frac{n}{k(k-1)}$ with high probability.  We calculate
\begin{displaymath}
  \E{X} 
  = \binom{n}{k} p
  \geq ( 1 - 0.8 \epsilon ) 
  \frac{(n-k)^k}{k(k-1) n^{k-1}} \,.
\end{displaymath}
Next, observe that if $\big( \frac{n-k}{n} \big)^k \geq 1 - 0.01
\epsilon$, then we will have $\E{X} \geq ( 1 - 0.81 \epsilon )
\frac{n}{k(k-1)}$.  The Chernoff bound (Theorem \ref{thm:chernoff})
would then give
\begin{displaymath}
  \pr{X < (1 - \epsilon) \frac{n}{k(k-1)}}
  <
  e^{-\frac{(0.19 \epsilon)^2}{2} ( 1 - 0.81 \epsilon ) \frac{n}{k(k-1)} }  \,.
\end{displaymath}
Using $\epsilon < \frac{1}{2}$, and $n > \frac{100 k^2}{\epsilon^2}
\log n$, we conclude that this probability is at most $n^{-1.07}$.  It
remains to show that $\big( \frac{n-k}{n} \big)^k \geq 1 -
\frac{\epsilon}{100}$.  After rearrangement, we see that this is
equivalent to
\begin{equation}
  \label{ineq:coupling-need}
  n \geq \frac{k}{1 - \left( 1 - \frac{\epsilon}{100} \right)^{1/k}} \,.
\end{equation}
However, $e^{-x} \leq 1 - \frac{x}{2}$ for all $0 \leq x \leq 1$, so
\begin{displaymath}
  \left( 1 - \frac{\epsilon}{100} \right)^{1/k}
  \leq
  e^{-\frac{\epsilon}{100k}}
  \leq
  1 - \frac{\epsilon}{200k} \,.
\end{displaymath}
This, together with our assumption that $n > \frac{100
  k^2}{\epsilon^2} \log n > \frac{200 k^2}{\epsilon}$, produces
\eqref{ineq:coupling-need}.  \hfill $\Box$

\vspace{3mm}

\begin{lemma}
  \label{lem:log-components}
  Let $0 < \epsilon < \frac{1}{2}$ and $n > \frac{200
    k^2}{\epsilon}$.  Let $p = (1 - 0.8 \epsilon)
  \frac{(k-2)!}{n^{k-1}}$.  In the random hypergraph $H_{n,p;k}$, with
  probability at least $1 - n^{-1}$, all connected components are of
  size at most $\frac{16k}{\epsilon^2} \log n$.
\end{lemma}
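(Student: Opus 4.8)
The plan is to bound the size of the largest component in $H_{n,p;k}$ by a first-moment (union bound) argument over all potential ``tree-like'' growth patterns, exploiting that $p$ is subcritical. The natural way to do this is via a branching-process / exploration argument: starting from any fixed vertex $v$, explore its component by revealing hyperedges one at a time. Each newly revealed hyperedge through an ``active'' vertex brings in at most $k-1$ new vertices, and the expected number of new hyperedges attached to any given vertex is governed by the expected degree $\lambda := \binom{n-1}{k-1} p \approx (1-0.8\epsilon)\frac{1}{k-1}$, so that the expected number of new vertices per vertex explored is about $(k-1)\lambda \approx 1 - 0.8\epsilon < 1$. This subcriticality is exactly what forces exponential decay in the component-size distribution.

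Concretely, I would first estimate $\lambda = \binom{n-1}{k-1} p$ and show, using $n > 200k^2/\epsilon$ to control the $\binom{n-1}{k-1}$ versus $n^{k-1}/(k-1)!$ discrepancy, that the expected number of new vertices introduced by exploring one vertex is at most $1 - \epsilon/2$ (say). Then I would set up a domination of the exploration process by a Galton--Watson process — or, more cleanly for explicit bounds, directly write the probability that the component of a \emph{fixed} vertex has size at least $s$ as bounded by the probability that a sum of independent $\bin{\,\cdot\,, p}$ variables (the hyperedge-degrees encountered) is large enough to sustain $s$ vertices. The event ``component of $v$ has $\geq s$ vertices'' requires at least $\lceil (s-1)/(k-1) \rceil$ hyperedges to have been found among roughly $s \cdot \binom{n-1}{k-1}$ potential slots; a Chernoff bound (Theorem~\ref{thm:chernoff}, upper tail) on this binomial count, whose mean is $\approx (1-0.8\epsilon)\frac{s}{k-1}$ while we need it to reach $\approx \frac{s}{k-1}$, yields a bound of the form $e^{-c \epsilon^2 s / k}$ for an absolute constant $c$. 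Choosing $s = \frac{16k}{\epsilon^2}\log n$ makes this at most $n^{-2}$, and a union bound over all $n$ choices of starting vertex gives failure probability at most $n^{-1}$, as desired. The constant $16$ should be comfortably large enough to absorb the loss factors from the binomial-coefficient approximations and the Chernoff exponent $\frac{\epsilon^2}{3}$.

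The main obstacle I anticipate is making the exploration-domination step rigorous while keeping the constants explicit: the exploration of a component reveals edges \emph{conditionally} (we only look at edges incident to already-discovered vertices, and we must avoid double-counting edges among discovered vertices), so the count of revealed hyperedges is not literally an independent binomial. The standard fix is to be generous — bound the component-size event by the existence of \emph{some} witness structure (e.g., a spanning ``hypertree'' on $s$ vertices using $\lceil(s-1)/(k-1)\rceil$ edges, plus possibly one extra edge), and union-bound over the number of such structures; the number of labelled hypertrees on a given $s$-vertex set times $p^{\#\text{edges}}$ is the quantity to estimate. The combinatorial count of such structures contributes a factor that is at most $\big(e s \binom{n}{k-1}/s\big)^{\#\text{edges}}$-ish, and one checks this is dominated by the $p^{\#\text{edges}} \cdot n^{\#\text{vertices}}$ smallness coming from subcriticality. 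Carrying the $(1-0.8\epsilon)$ slack through this computation — verifying that the per-edge factor is bounded by $1 - \Omega(\epsilon)$ — is the one place where the arithmetic must be done carefully, but it is routine given the earlier lemmas and the hypothesis $n > 200k^2/\epsilon$.
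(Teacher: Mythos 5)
Your primary plan (BFS exploration from a fixed vertex, stochastic domination of the exploration queue by a random walk with independent $\bin{\binom{n}{k-1},p}$ increments, Chernoff on the cumulative binomial, then union bound over the $n$ starting vertices) is exactly the route the paper takes, and it does go through rigorously. The obstacle you anticipate — that the count of revealed hyperedges is "not literally an independent binomial" because of conditioning on previously discovered edges — is real in general, but the paper defuses it with one careful observation that you were missing: when processing vertex $v_t$, only hyperedges with all $k-1$ other vertices in $V\setminus\{v_1,\dots,v_{t-1}\}$ are exposed, so every exposed hyperedge has $v_t$ as its \emph{smallest-labeled} vertex and is therefore never examined again; this makes the step-$t$ edge count a fresh $\bin{\binom{n}{k-1},p}$, genuinely independent across $t$, and the domination $Y_{t+1}-Y_t \preceq (k-1)\bin{\binom{n}{k-1},p}-1$ follows. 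With that in hand the cumulative queue size at time $T=\frac{16k}{\epsilon^2}\log n$ is distributed as $1+(k-1)\bin{\binom{n}{k-1}T,p}-T$, and the Chernoff estimate you sketch gives $\pr{Z_T\geq 1}<n^{-2}$, just as in the paper. So you did not need the pivot to the witness/hypertree-counting first-moment argument: that alternative is also sound in principle, and is the classical Erd\H{o}s--R\'enyi style bound, but for $k$-uniform hypergraphs the enumeration of labelled spanning hypertrees is messier and would make the explicit constants harder to track, which is precisely what this lemma needs to deliver. In short: commit to your first approach, add the "each hyperedge is exposed at most once because its smallest label is the current vertex" observation, and the proof matches the paper's.
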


\noindent \textbf{Proof.}\, Let $V$ be the vertex set of the entire
hypergraph.  Let $v$ be a fixed vertex, and let the random variable
$X_v$ be the size of the connected component containing $v$.  We
generate $X_v$ by exposing hyperedges one at a time via
breadth-first-search.  Specifically, we maintain time-varying sets
$A_t$ of distinct active vertices and $B_t$ of completed vertices, and
build a labeling of the vertices $v_1, v_2, \ldots, v_n$, initializing
$A_0 = \{v\}$ and $B_0 = \emptyset$.  At time $t$, we arbitrarily
select a vertex $w \in A_t$ (if $A_t$ is empty, we stop), define the
label $v_t = w$, and set $A_{t+1} = A_t \setminus \{w\}$ and $B_{t+1}
= B_t \cup \{w\}$.  Also, we expose all hyperedges which have exactly
$k-1$ vertices in $V \setminus \{v_1, \ldots, v_{t-1}\}$, together
with $w$ as the $k$-th vertex.  Here, ``expose'' means that we reveal
whether or not the potential hyperedge in fact appears in this
particular realization of $H_{n,p;k}$.  Finally, for each vertex other
than $w$ which is in a newly exposed hyperedge, we add a single copy
of that vertex to $A_{t+1}$.

Importantly, we never expose the same hyperedge twice, because the
hyperedges exposed at time $t$ have the property that their smallest
labeled vertex is precisely $v_t$.  Therefore, the decisions are
independent at each stage, and the number of vertices added to the
queue is stochastically dominated by $(k-1)$ times the Binomial random
variable $\bin{\binom{n}{k-1}, p}$.  In particular, if we define the
random variables $Y_t = |A_t|$, then each successive difference
$Y_{t+1} - Y_t$ is stochastically dominated by $(k-1)
\bin{\binom{n}{k-1}, p} - 1$.  Therefore, if we define the infinite
sequence $Z_t$ as $Z_0 = 1$, $Z_{t+1} = Z_t + (k-1)
\bin{\binom{n}{k-1}, p} - 1$, we may couple the probability spaces
such that $Y_t \leq Z_t$ until $Y_t$ hits 0 (the breadth-first-search
is exhausted).

Let $T = \frac{16k}{\epsilon^2} \log n$.  Since a binomial random
variable is the sum of independent Bernoullis, the sum of independent
and identically distributed binomials is another binomial.  Thus the
distribution of $Z_T$ is precisely $1 + (k-1) \bin{\binom{n}{k-1} T,
  p} - T$.  The Chernoff bound will control the probability that $Z_T
\geq 1$, and this will be sufficient because if the integer $Z_T < 1$,
then the breadth-first-search must have completed, as $Y_t \leq Z_t$
during it.  Observe that $Z_T \geq 1$ happens precisely when
$\bin{\binom{n}{k-1} T, p} \geq \frac{T}{k-1}$.  Yet the expectation
of this binomial is
\begin{displaymath}
  \binom{n}{k-1} T (1 - 0.8\epsilon) \frac{(k-2)!}{n^{k-1}}
  \leq
  (1-0.8\epsilon) \frac{T}{k-1} \,,
\end{displaymath}
so when $Z_T \geq 1$, that binomial exceeds its expectation $\mu$ by a
factor of at least $0.8\epsilon$.  Hence
\begin{displaymath}
  \pr{Z_T \geq 1}
  \leq
  e^{-\frac{(0.8 \epsilon)^2}{3} \mu} \,.
\end{displaymath}
To continue, we need a lower bound on $\mu$.  At the end of the proof
of the previous lemma, we showed that $n \geq \frac{200k^2}{\epsilon}$
implies that $\big( \frac{n-k}{n} \big)^k \geq 1 - 0.01 \epsilon$.
Since $\big( \frac{n-k}{n} \big)^{k-1} > \big( \frac{n-k}{n}
\big)^k$, and we assume $\epsilon < \frac{1}{2}$, we therefore have
that
\begin{displaymath}
  \mu 
  = 
  \binom{n}{k-1} T (1 - \epsilon) \frac{(k-2)!}{n^{k-1}}
  \geq
  ( 1 - 0.81 \epsilon ) \frac{T}{k-1} 
  \geq
  0.595 \cdot \frac{T}{k-1} \,.
\end{displaymath}
Thus using $T = \frac{16k}{\epsilon^2} \log n$, we have
\begin{displaymath}
  \pr{Z_T \geq 1}
  <
  e^{-\frac{(0.8 \epsilon)^2}{3} \cdot 0.595 \cdot \frac{T}{k-1}} 
  <
  n^{-2} \,,
\end{displaymath}
i.e., a fixed vertex $v$ has probability at least $1 - n^{-2}$ of
having its component size at most $\frac{16k}{\epsilon^2} \log n$.  A
final union bound over the $n$ vertices yields the desired result.
\hfill $\Box$

\medskip

We now move to introduce the key parameter which characterizes the overall
running time of our algorithm.  This parameter has been successfully used
to analyze various discrete random processes, ranging from percolation
(where its name originated from statistical physics) to the theory of
random graphs and stochastic coalescence processes.

\begin{definition}
  \label{def:susceptibility}
  The \textbf{susceptibility} of a given hypergraph $H$, denoted by
  $\chi(H)$, is the expected size of the component which contains a
  uniformly random vertex.  Equivalently, if the connected components
  are $C_1, C_2, \ldots, C_t$, then $\chi = \frac{1}{n} \sum_i
  |C_i|^2$.
\end{definition}

It turns out that the susceptibility evolves smoothly with the number of
edges $m$.  The following theorem applies the Differential Equations method
to estimate its growth.  Its analysis builds upon the approach used in
\cite{BFKLS}, but improves the error bounds from exponential to polynomial
(in both $\frac{1}{\epsilon}$ and $n$).

\begin{theorem}
  \label{thm:susceptibility}
  Let $0 < \epsilon < \frac{1}{2}$ be given, and assume that
  $\frac{n}{\log^6 n} > \frac{40000 k^6}{\epsilon^{12}}$.  Let $m =
  (1-\epsilon) \frac{n}{k(k-1)}$.  With probability at least $1 - 3
  n^{-1}$, the random $k$-uniform hypergraph $H_{n,m;k}$ has
  susceptibility at most
  \begin{equation}
    \label{eq:suscep:target}
    \frac{1}{\epsilon} 
    + \frac{ 200 k^3 \log^3 n }{\epsilon^7 \sqrt{n}} \,.
  \end{equation}
\end{theorem}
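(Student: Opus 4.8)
The plan is to run the sequential edge-exposure process $H_0,H_1,\dots$ of Lemma~\ref{lem:process-Hnm} and track the single quantity $\chi_t := \chi(H_t)$ as edges are added, controlling it with a supermartingale so that it hugs the solution of the ODE $\tfrac{d\chi}{d\tau} = k(k-1)\chi^2$, $\chi(0)=1$, whose value at ``time'' $\tau = \tfrac{1-\epsilon}{k(k-1)}$ is exactly $\tfrac1\epsilon$. Set the deterministic horizon $T := m + \lceil\log n\rceil$. On the events listed next, $H_T$ contains the first $m$ edges added by the process, which form an exact copy of $H_{n,m;k}$; since $\chi$ is non-decreasing under edge addition, it therefore suffices to bound $\chi_T$. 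The events, each failing with probability at most $n^{-1}$, are: (i) by Lemma~\ref{lem:process-Hnm}, at most $\log n$ of the first $\tfrac{n}{k(k-1)}$ rounds are wasted --- so $H_T$ indeed has at least $m$ edges and $T \le \tfrac{n}{k(k-1)}$ once $n$ is large; (ii) by (a mild variant of) Lemma~\ref{lem:coupling}, $H_t \subseteq H_{n,p;k}$ for every $t \le T$, where $p = (1-0.8\epsilon)\tfrac{(k-2)!}{n^{k-1}}$; (iii) by Lemma~\ref{lem:log-components}, every component of $H_{n,p;k}$ --- hence of every $H_t$ with $t\le T$ --- has size at most $L := \tfrac{16k}{\epsilon^2}\log n$. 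The hypothesis $\tfrac{n}{\log^6 n} > \tfrac{40000k^6}{\epsilon^{12}}$ comfortably implies the hypotheses of all three lemmas.

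The one-step drift is the clean part. Adding an edge on vertices $u_1,\dots,u_k$ merges the distinct components they meet; writing $x_j := |C_{u_j}|$ for the component sizes (with multiplicity), the increase in $\sum_i |C_i|^2$ is at most $\bigl(\sum_j x_j\bigr)^2 - \sum_j x_j^2 = \sum_{j\ne j'} x_j x_{j'}$. In the process the $u_j$ are i.i.d.\ uniform and independent of $H_t$, so $\E{x_j} = \tfrac1n\sum_i |C_i|^2 = \chi_t$, whence $\E{\chi_{t+1}-\chi_t \mid H_t} \le \tfrac1n \sum_{j\ne j'}\E{x_j}\E{x_{j'}} = \tfrac{k(k-1)}{n}\chi_t^2$ (an invalid or repeated sample contributes $0$). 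Since always $\chi_t \ge 1$ and $\chi_{t+1}\ge\chi_t$, the quantity $D_t := 1/\chi_t$ has increments $0 \le D_t - D_{t+1} = \tfrac{\chi_{t+1}-\chi_t}{\chi_t\chi_{t+1}} \le \tfrac{\chi_{t+1}-\chi_t}{\chi_t^2}$, so $\E{D_t - D_{t+1}\mid H_t} \le \tfrac{k(k-1)}{n}$ and $M_t := -D_t - \tfrac{k(k-1)t}{n}$ is a supermartingale. On event (iii) each merge joins at most $k$ components of size $\le L$, so $\chi_{t+1}-\chi_t \le \tfrac{(kL)^2}{n}$ and $|M_{t+1}-M_t| \le \tfrac{(kL)^2}{n}+\tfrac{k(k-1)}{n} \le \tfrac{2k^2L^2}{n} =: C$; to make Azuma's bounded-difference hypothesis hold unconditionally I would work with $M$ stopped at the first time a component exceeds size $L$, which on event (iii) happens only after $T$.

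Now apply Azuma (Theorem~\ref{thm:azuma}) to this stopped supermartingale over the $T \le \tfrac{n}{k(k-1)} \le n$ steps, with $M_0 = -1$ and a parameter $\lambda$ to be chosen: with probability at least $1 - \exp\bigl(-\lambda^2/(2C^2 n)\bigr)$ one gets $M_T < -1+\lambda$, i.e.\ $D_T > 1 - \tfrac{k(k-1)T}{n} - \lambda \ge \epsilon - \tfrac{k(k-1)\lceil\log n\rceil}{n} - \lambda$. Taking $\lambda$ of order $\tfrac{k^4\log^{5/2}n}{\epsilon^4\sqrt{n}}$ (at least $C\sqrt{2n\log n}$) pushes $\exp(-\lambda^2/(2C^2 n))$ below $n^{-1}$ while keeping $\epsilon - \tfrac{k(k-1)\lceil\log n\rceil}{n} - \lambda \ge \tfrac\epsilon2 > 0$, so inverting $D_T$, $\chi_T = 1/D_T \le \tfrac1{\epsilon - \delta}$ with $\delta := \lambda + \tfrac{k(k-1)\lceil\log n\rceil}{n} \le \tfrac\epsilon2$, and $\tfrac1{\epsilon-\delta} \le \tfrac1\epsilon + \tfrac{2\delta}{\epsilon^2}$. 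Hence $\chi_T \le \tfrac1\epsilon + \tfrac2{\epsilon^2}\bigl(\lambda + \tfrac{k(k-1)\lceil\log n\rceil}{n}\bigr)$, which --- once $\lambda$ is pinned down with the correct constant, the $\tfrac1n$-scale term being negligible against the $\tfrac1{\sqrt{n}}$-scale target --- is at most $\tfrac1\epsilon + \tfrac{200k^3\log^3 n}{\epsilon^7\sqrt{n}}$, as required. Finally, the four failure probabilities --- those of Lemmas~\ref{lem:process-Hnm}, \ref{lem:coupling}, \ref{lem:log-components} and the Azuma tail --- sum to at most $3n^{-1}$ (the first is microscopic, the second is $n^{-1.07}$), and since $\chi(H_{n,m;k}) \le \chi_T$ the proof is complete.

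I expect the main obstacle to be the quantitative bookkeeping in the final step rather than any conceptual hurdle. The bounded-difference parameter $C$ unavoidably inherits the component-size ceiling $L = \Theta(\log n)$, so $C$ is of order $\log^2 n / n$ up to $k$ and $\epsilon$ factors, which through Azuma forces $\lambda$ up to order $\mathrm{polylog}(n)/\sqrt{n}$; one then has to check, with the \emph{precise} constants and polylog exponents --- and tracking the $k$-dependence carefully --- that the induced correction really fits inside $\tfrac{200k^3\log^3 n}{\epsilon^7\sqrt{n}}$ and that the Azuma tail is genuinely below $n^{-1}$, all under exactly the stated hypothesis on $n$. This is the ``substantially sharper analysis of the susceptibility growth'' promised in the introduction, and it is considerably more delicate than the ODE heuristic. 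A secondary point requiring care is handling the invalid or repeated samples so that they cost only the already-accounted $\lceil\log n\rceil$ rounds and perturb neither the drift estimate nor the identification of the first $m$ added edges with $H_{n,m;k}$.
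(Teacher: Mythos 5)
Your proof is correct and reaches the stated bound, but it takes a genuinely different route from the paper's. The paper tracks the susceptibility $X_t$ directly and compensates for the quadratic nonlinearity in the drift bound $\E{X_{t+1}-X_t\mid\mathcal{F}_t}\le\tfrac{k(k-1)}{n}X_t^2$ by subtracting both the ODE solution $x(\theta)=\tfrac{1}{1-k(k-1)\theta}$ and a carefully chosen error function $f(\theta)=x(\theta)^3$ scaled by $\Delta$, with $f$ engineered to satisfy $f'=3k(k-1)xf$ so that the stopped process $Y_t=X_t-x(t/n)-f(t/n)\Delta$ is a supermartingale; this is a standard Wormald-type bootstrap and requires the inequality $f\Delta\le x$ on the trajectory. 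Your approach instead tracks the reciprocal $D_t=1/\chi_t$: because $\chi$ is non-decreasing and $\chi_t$ is $\mathcal{F}_t$-measurable, the one-sided bound $D_t-D_{t+1}\le(\chi_{t+1}-\chi_t)/\chi_t^2$ converts the quadratic drift into a \emph{constant} drift $\E{D_t-D_{t+1}\mid\mathcal{F}_t}\le\tfrac{k(k-1)}{n}$, so $M_t=-D_t-\tfrac{k(k-1)t}{n}$ is a supermartingale with no bootstrap required, and Azuma applies directly to the process stopped when a component first exceeds $L=\tfrac{16k}{\epsilon^2}\log n$. Inverting $D_T\ge\epsilon-\delta$ with $\delta=\lambda+\tfrac{k(k-1)\lceil\log n\rceil}{n}$ gives $\chi_T\le\tfrac{1}{\epsilon-\delta}\le\tfrac1\epsilon+\tfrac{2\delta}{\epsilon^2}$, and the bookkeeping does land within the stated target (your $\lambda$ scales like $k^3\log^{5/2}n/(\epsilon^4\sqrt n)$ since the $\sqrt{k(k-1)}$ from $T\le n/(k(k-1))$ shaves a factor of $k$, and the hypothesis on $n$ forces $\log n$ large enough that $\tfrac{2\lambda}{\epsilon^2}$ fits under $\tfrac{200k^3\log^3 n}{\epsilon^7\sqrt n}$). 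Your linearization trick is arguably cleaner than the paper's and loses a half-power of $\log n$ in the error term; its only cost is that it relies essentially on the monotonicity of $\chi_t$, whereas the paper's direct bootstrap would still work for a non-monotone tracked quantity. A secondary cosmetic difference: you fold the $\lceil\log n\rceil$ wasted rounds into the Azuma horizon $T$, while the paper stops Azuma at time $m$ and patches in the last $\log n$ rounds by a separate additive bound; both are fine. Do take care, as you note, to state the coupling variant of Lemma~\ref{lem:coupling} with the threshold $m+\lceil\log n\rceil$ rather than $m$, which only costs a negligible change in the Chernoff constant.
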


\noindent \textbf{Proof.}\, Define
\begin{displaymath}
  T = (1-\epsilon) \frac{n}{k(k-1)} \,.
\end{displaymath}
Consider the random hypergraph process $H_0, H_1, \ldots$ of Lemma
\ref{lem:process-Hnm}.  We will run this process to time $T + \log n$
which by Lemma \ref{lem:process-Hnm} will contain $H_{n,m;k}$ with
probability at least $1 - n^{-1}$, because $\log n < \epsilon \cdot
\frac{n}{k(k-1)}$.  It therefore suffices to show that with
probability at least $1 - 2n^{-1}$, the susceptibility of $H_{T+\log
  n}$ is at most \eqref{eq:suscep:target}.  We track the evolution of
susceptibility by defining $X_t$ to be the susceptibility of $H_t$.
Suppose that in the $(t+1)$-st round, the $k$ vertices of the incoming
hyperedge lie in components $C_1, \ldots, C_k$, where some of the
components may be repeated.  Then, the susceptibility increases by at
most
\begin{displaymath}
  \frac{1}{n} \left[
    (|C_1| + \cdots + |C_k|)^2 - 
    (|C_1|^2 + \cdots + |C_k|^2)
  \right]
  =
  \frac{2}{n} \sum_{r<s} |C_r| |C_s| \,,
\end{displaymath}
with equality only if $C_1, \ldots, C_k$ are distinct.  Define the
filtration $\mathcal{F}_0, \mathcal{F}_1, \ldots$ such that
$\mathcal{F}_t$ captures the outcomes up to and including time $t$.
Let us bound $\E{X_{t+1} - X_t \mid \mathcal{F}_t}$.  For this, let
$c_1, \ldots, c_z$ be the component sizes after time $t$.  Since our
process selects $k$ independent vertices for the next hyperedge, and
the hyperedge is added only if they are distinct from each other, and
form a new hyperedge, we then have
\begin{align*}
  \E{X_{t+1} - X_t \mid \mathcal{F}_t}
  &\leq
  \sum_{i_1, \ldots, i_k \in [z]}
  \left( \frac{c_{i_1}}{n} \cdot \frac{c_{i_2}}{n} \cdots \frac{c_{i_k}}{n} \right) 
  \cdot \frac{2}{n} \sum_{1 \leq r < s \leq k} c_{i_r} c_{i_s} \\
  &=
  \frac{2}{n} \cdot \binom{k}{2} 
  \cdot
  \sum_{i_1, \ldots, i_k \in [z]}
  \frac{c_{i_1}^2 c_{i_2}^2 c_{i_3} c_{i_4} \cdots c_{i_k}}{n^k} \\
  &=
  \frac{k(k-1)}{n} \left( 
    \sum_{i_1} \frac{c_{i_1}^2}{n}
  \right)
  \left( 
    \sum_{i_2} \frac{c_{i_2}^2}{n}
  \right)
  \left( 
    \sum_{i_3} \frac{c_{i_3}}{n}
  \right)
  \cdots
    \left( 
    \sum_{i_k} \frac{c_{i_k}}{n}
  \right) \\
  &=
  \frac{k(k-1)}{n} \left( X_t \right) \left( X_t \right) (1) \cdots (1)
  = \frac{k(k-1)}{n} X_t^2 \,.
\end{align*}
This suggests that the evolution of $X_t$ may resemble that of the
differential equation $x'(\theta) = k(k-1) x(\theta)^2$, where we
parameterize $\theta = \frac{t}{n}$.  So, let us define 
\begin{displaymath}
  x(\theta) = \frac{1}{1 - k(k-1) \theta} \,,
\end{displaymath}
which is the solution of that differential equation with initial
condition $x(0) = 1$.  We use $x(\theta)$ to convert $(X_t)$ into a
supermartingale, defining
\begin{displaymath}
  Y_t = X_t - x\left( \frac{t}{n} \right) - f\left( \frac{t}{n} \right) \Delta \,,
\end{displaymath}
where
\begin{displaymath}
  f(\theta) = \frac{1}{ ( 1 - k(k-1) \theta )^3} \,,
  \quad\quad
  \Delta = \frac{199 k^3 \log^3 n}{\epsilon^4 \sqrt{n}}
\end{displaymath}
is the solution to the differential equation
\begin{displaymath}
  f'(\theta) = 3 k(k-1) \cdot x(\theta) f(\theta) \,;
  \quad\quad
  f(0) = 1 \,.
\end{displaymath}
Also, define $E_t$ to be the event that \textbf{(i)} $X_t \leq x\big(
\frac{t}{n} \big) + f\big( \frac{t}{n} \big) \Delta$ and \textbf{(ii)}
all components of $H_t$ have size at most $\frac{16k}{\epsilon^2}
\log n$.  Then, define the stopping time $\tau$ to be either $T$, or
the first moment that $E_t$ fails.  Define
\begin{displaymath}
  Z_t = Y_{\min\{t, \tau\}} \,.
\end{displaymath}
Then, 
\begin{displaymath}
  \E{Z_{t+1} - Z_t \mid \mathcal{F}_t, \overline{E_t}} = 0 \,
\end{displaymath}
and
\begin{displaymath}
  \E{Z_{t+1} - Z_t \mid \mathcal{F}_t, E_t}
  \leq
  \frac{k(k-1)}{n} X_t^2 
  - \left[
    x\left( \frac{t+1}{n} \right) - x\left( \frac{t}{n} \right)
  \right]
  -
  \left[
    f\left( \frac{t+1}{n} \right) - f\left( \frac{t}{n} \right)
  \right]
  \Delta \,,
\end{displaymath}
which by convexity of $x(\theta)$ and $f(\theta)$ is at most
\begin{align*}
  \E{Z_{t+1} - Z_t \mid \mathcal{F}_t, E_t}
  &\leq
  \frac{k(k-1)}{n} X_t^2 
  - \frac{1}{n} x'\left( \frac{t}{n} \right)
  - \frac{1}{n} f'\left( \frac{t}{n} \right)
  \Delta \\
  &\leq
  \frac{k(k-1)}{n} \left[
    x\left( \frac{t}{n} \right) + f\left( \frac{t}{n} \right) \Delta
  \right]^2
  - \frac{1}{n} x'\left( \frac{t}{n} \right)
  - \frac{1}{n} f'\left( \frac{t}{n} \right)
  \Delta \\
  &=
  \frac{k(k-1)}{n} \left[
    2 x\left( \frac{t}{n} \right) f\left( \frac{t}{n} \right) \Delta
    +
    f\left( \frac{t}{n} \right)^2 \Delta^2
  \right]
  - \frac{1}{n} f'\left( \frac{t}{n} \right) \Delta \,.
\end{align*}
Our condition on $n$ is essentially equivalent to the fact that $\Delta \leq
\epsilon^2$.  This implies that over the range $0 \leq \theta \leq
\frac{1-\epsilon}{k(k-1)}$, we always have
\begin{equation}
  \label{ineq:need}
  f(\theta) \Delta \leq x(\theta)  \,,
\end{equation}
so
\begin{displaymath}
  \E{Z_{t+1} - Z_t \mid \mathcal{F}_t, E_t}
  \leq
  \frac{\Delta}{n} \left[
    3k(k-1) \cdot x\left( \frac{t}{n} \right) f\left( \frac{t}{n} \right) 
    - f'\left( \frac{t}{n} \right)
  \right]
  =
  0 \,,
\end{displaymath}
and we conclude that $Z_t$ is in fact a supermartingale.  Furthermore,
by part (ii) of the definition of $E_t$, we know that the addition of a
single hyperedge cannot increase the susceptibility by more than
\begin{displaymath}
  \frac{1}{n} \left[
    \left( k \cdot \frac{16k}{\epsilon^2} \log n \right)^2
    -
    k \cdot \left( \frac{16k}{\epsilon^2} \log n \right)^2
  \right]
  <
  \frac{256 k^4 \log^2 n}{\epsilon^4 n} \,.
\end{displaymath}
Since $x(\theta)$ and $f(\theta)$ are both increasing functions, this
is an upper bound for the incremental change $Z_{t+1} - Z_t$.  On the
other hand, the susceptibility can never decrease, and on the range
$\theta < \frac{1-\epsilon}{k(k-1)}$, the derivatives $x'(\theta)$ and
$f'(\theta)$ increase to $\frac{k(k-1)}{\epsilon^2}$ and $\frac{3
  k(k-1)}{\epsilon^{4}}$, respectively.

Since $x(\theta)$ and $f(\theta)$ are convex, we conclude that as $t$
ranges from 0 to $T$, the maximum one-step change in $Z_t$ is bounded
by
\begin{displaymath}
  C = \max\left\{
    \frac{256 k^4 \log^2 n}{\epsilon^4 n} \,,
    \frac{k(k-1)}{\epsilon^2} \cdot \frac{1}{n} + \frac{3k(k-1)}{\epsilon^4} \cdot \frac{\Delta}{n}
  \right\} 
  =
  \frac{256 k^4 \log^2 n}{\epsilon^4 n}
  \,.
\end{displaymath}
Yet $Z_0 = -\Delta$, so Hoeffding-Azuma (Theorem \ref{thm:azuma}) implies that
\begin{displaymath}
  \pr{ Z_T \geq 0 }
  \leq
  \exp\left\{
    -\frac{\Delta^2}{2 C^2 T}
  \right\} 
  <
  \exp\left\{
    -\frac{4 n \log^2 n}{k^2 T}
  \right\} 
  <
  n^{-1} \,.
\end{displaymath}
Also, by Lemma \ref{lem:log-components}, the probability that $H_T$
has a component with size exceeding $\frac{16k}{\epsilon^2} \log n$ is
at most $n^{-1}$.  Hence with probability at least $1 - 2n^{-1}$, we
have that both $Z_T < 0$ and all components of $H_T$ have size at
most $\frac{16k}{\epsilon^2} \log n$.  When this happens, we must
never have had any $E_t$ fail, and hence we conclude that $Y_T = Z_T <
0$, and so the susceptibility after $T$ rounds is
\begin{displaymath}
  X_T 
  \leq 
  x\left(\frac{T}{n}\right) + f\left(\frac{T}{n}\right) \Delta
  =
  x\left(\frac{1-\epsilon}{k(k-1)}\right) + f\left(\frac{1-\epsilon}{k(k-1)}\right) \Delta
  =
  \frac{1}{\epsilon} + \frac{1}{\epsilon^3} \cdot \frac{199 k^3 \log^3 n}{\epsilon^4 \sqrt{n}} \,.
\end{displaymath}

Adding $\log n$ more rounds to reach time $T + \log n$, we see that
these can link at most $k \log n$ clusters, and since we conditioned
on all clusters having size at most $\frac{16k}{\epsilon^2} \log n$,
this can further increase the susceptibility by at most
\begin{displaymath}
  \frac{1}{n} \cdot \left( k \log n
    \cdot \frac{16k}{\epsilon^2} \log n \right)^2
  =
  \frac{256 k^4 \log^4 n}{\epsilon^4 n}
  <
  \frac{k^3 \log^3 n}{\epsilon^7 \sqrt{n}} \,,
\end{displaymath}
by our initial assumption on the size of $n$.  Therefore, with
probability at least $1 - 2 n^{-1}$, the total susceptibility after $T
+ \log n$ rounds is at most $\frac{1}{\epsilon} + \frac{200 k^3 \log^3
  n}{\epsilon^7 \sqrt{n}}$, as required.  \hfill $\Box$

\medskip

We now combine all of our results to produce our main theorem, which
provides a single high-probability bound for the final sum of squared
component sizes in $H_{n,m;k}$.

\medskip

\noindent \textbf{Proof of Theorem \ref{thm:main}.}\, As explained in
section~\ref{sec:algorithm}, the time for inserting a key in a component of
size $s$ is $O(k^2 s)$.  This means that if the final hypergraph contains a
component of size $s$, it took only $O(\sum_{i=1}^s k^2 s)$ time to insert
all edges of that component, i.e., $O(k^2 s^2)$ operations. Each edge is in
exactly one component, and we recognize that summing the squares of the
final component sizes gives exactly $n$ times the final susceptibility.
Thus, we can bound the total running time by $O(k^2 n)$ times the final
susceptibility, which by Theorem \ref{thm:susceptibility} is bounded by a
constant with high probability. \hfill $\Box$

\section{Acknowledgments}

We thank Alan Frieze for invigorating discussions which inspired us to
pursue this project.

\bibliographystyle{abbrv}
\bibliography{threshold23}

\end{document}